\def\Re{{\operatorname{Re}}}
\def\openone{\leavevmode\hbox{\small1\kern-3.8pt\normalsize1}}
\def\cn{{\cal N}}
\def\CC{\mathbb{C}}
\def\11{\mathbb{I}}
\def\sln{\succeq_{\operatorname{l.n.}}}
\def\mc{\succeq_{\operatorname{m.c.}}}
\def\deg{\succeq_{\operatorname{deg}}}
\def\tdeg{\succeq_{\operatorname{t-deg}}}
\newtheorem{definition}{Definition}[section]
\newtheorem{proposition}[definition]{Proposition}
\newtheorem{lemma}[definition]{Lemma}
\newtheorem{corollary}[definition]{Corollary}
\newcommand{\tr}{\mathop{\rm Tr}\nolimits}
\newcommand{\bra}[1]{\langle#1|}
\newcommand{\ket}[1]{|#1\rangle}
\newcommand{\cA}{{\cal A}}
\newcommand{\cB}{{\cal B}}
\newcommand{\cC}{{\cal C}}
\newcommand{\cD}{{\cal D}}
\newcommand{\cE}{{\cal E}}
\newcommand{\cF}{{\mathcal{F}}}
\newcommand{\cN}{{\cal N}}
\newcommand{\cH}{{\cal H}}
\newcommand{\cR}{{\cal R}}
\newcommand{\cM}{{\mathcal{M}}}
\newcommand{\cO}{{\cal O}}
\def\d{\mathrm{d}}
\newcommand{\notimplies}{\;\not\!\!\!\implies}
\def\sln{\succeq_{\operatorname{l.n.}}}
\numberwithin{equation}{section}
\DeclareRobustCommand\openone{\leavevmode\hbox{\small1\normalsize\kern-.33em1}}
\newcommand{\id}{{\rm{id}}}
\newcommand{\be}{\begin{equation}}
	\newcommand{\ee}{\end{equation}}
\newcommand{\bea}{\begin{eqnarray}}
	\newcommand{\eea}{\end{eqnarray}}
\newcommand{\beas}{\begin{eqnarray*}}
	\newcommand{\eeas}{\end{eqnarray*}}
\DeclareFontFamily{U}{mathx}{\hyphenchar\font45}
\DeclareFontShape{U}{mathx}{m}{n}{<-> mathx10}{}
\DeclareSymbolFont{mathx}{U}{mathx}{m}{n}
\DeclareMathAccent{\widebar}{0}{mathx}{"73}
\newcommand{\Hell}{H}
\renewcommand{\d}{\textnormal{d}}
\renewcommand{\Re}{D}
\newcommand{\Id}{{\mathds{1}}}
\DeclareMathAccent{\widehat}{0}{mathx}{"70}
\DeclareMathAccent{\widecheck}{0}{mathx}{"71}
\begin{document}
\title{Quantum Doeblin coefficients: A simple upper bound on contraction coefficients}

\author{%
  \IEEEauthorblockN{Christoph Hirche}
  \IEEEauthorblockA{Institute for Information Processing (tnt/L3S), Leibniz Universit\"at Hannover, Germany}

}

%\author[1,2]{Christoph Hirche\thanks{\href{mailto:christoph.hirche@gmail.com}{christoph.hirche@gmail.com}}}
%\affil[1]{Zentrum Mathematik, Technical University of Munich, 85748 Garching, Germany}
%\affil[2]{Centre for Quantum Technologies, National University of Singapore, Singapore}
%\author[2,3]{Marco Tomamichel}
%\affil[3]{Department of Electrical and Computer Engineering, National University of Singapore, Singapore}

\maketitle

\begin{abstract}
Contraction coefficients give a quantitative strengthening of the data processing inequality. As such, they have many natural applications whenever closer analysis of information processing is required. However, it is often challenging to calculate these coefficients. As a remedy we discuss a quantum generalization of Doeblin coefficients. These give an efficiently computable upper bound on many contraction coefficients. We prove several properties and discuss generalizations and applications. In particular, we give additional stronger bounds. One especially for PPT channels and one for general channels based on a constraint relaxation. Additionally, we introduce reverse Doeblin coefficients that bound certain expansion coefficients. 
\end{abstract}

\section{Introduction}

Classically, Doeblin first introduced what is now known as the Doeblin minorization condition~\cite{doeblin1937proprietes}, which holds if for a channel $P_{Y|X}$ there exists a constant $\alpha\in[0,1]$ and a probability distribution $Q_Y$, such that 
\begin{align}
    P_{Y|X}(y|x) \geq \alpha\, Q_Y(y), \quad \forall x,y. \label{Eq:C-Doeblin-min}
\end{align}
The Doeblin coefficient of $P_{Y|X}$ is simply the largest such constant $\alpha$. Its optimal value can be expressed as 
\begin{align}
    \alpha(P_{Y|X}) = \sum_y \min_x P_{Y|X}(y|x), \label{Eq:C-Doeblin-coe}
\end{align}
which is how it most commonly appears in the literature. This coefficient has found plenty of applications, for example in change detection~\cite{chen2022change}, multi-armed bandits~\cite{moulos2020finite}, Markov chain Monte Carlo methods~\cite{rosenthal1995minorization}, Markov decision processes~\cite{alden1992rolling} and mixing-models~\cite{steinhardt2015learning}. 
Quantum generalizations of the above concept have so far gained little attention, although a quantum version of Equation~\eqref{Eq:C-Doeblin-min} was discussed in~\cite{wolf2012quantum}. Here we want to introduce quantum Doeblin coefficients via a slightly different route. In the classical setting it is known that $\alpha(P_{Y|X})$ can be equivalently expressed as
\begin{align}
    \alpha(P_{Y|X})= \sup\{\epsilon : \cE_\epsilon \deg P_{Y|X} \},
\end{align}
where $\cE_\epsilon$ denotes the erasure channel with erasure probability $\epsilon$ and $\cE_\epsilon \deg P_{Y|X}$ means that $P_{Y|X}$ is degraded with respect to $\cE_\epsilon$. We refer to the next section for formal definitions. Motivated by the above, we define the quantum Doeblin coefficient of a quantum channel $\cN$ as
\begin{align}
    \alpha(\cN):= \sup\{\epsilon : \cE_\epsilon \deg\cn \},
\end{align}
where $\cE_\epsilon$ is now the quantum erasure channel. 
In this work we will discuss properties and applications of this quantity. In particular, we will see that it gives an efficiently computable upper bound on a wide class of contraction coefficients. These bounds take the form of semidefinite programs (SDP). 
Contraction coefficients give strengthenings of the data processing inequality, which is itself of vital importance to the theory of quantum information processing. Unfortunately, they are notoriously hard to evaluate numerically. Still they have found many applications. This makes good bounds very valuable. 

Motivated by the results, we will also expand the concept of Doeblin coefficients in several ways. First, by replacing degradability with transpose degradability we get an alternative bound for positive partial transpose (PPT) channels. This \textit{transpose Doeblin coefficient} sometimes performs better than the original version and leads to substantially improved bounds. 
Second, we give an improved bound by showing that one of the SDP constraints can be relaxed while still giving a valid bound. 
Third, we define \textit{reverse Doeblin coefficients} which, instead of upper bounding contraction coefficients, give lower bounds on expansion coefficients. This applies in particular to the expansion coefficient of the trace distance. 
The particular feature of these reverse Doeblin coefficients is that the erasure channel gets replaced by the depolarizing channel. 

In the next section, we will start by giving the necessary formal definitions. In Section~\ref{Sec:Doeblin-coef}, we discuss Doeblin coefficients, their properties and the transpose variant. In Section~\ref{Sec:rev-Doeblin} we introduce and discuss the reverse Doeblin coefficients. Section~\ref{Sec:Applications} will discuss some further general extensions and applications. Finally, we will conclude in Section~\ref{Sec:Conclusions}. 

\section{Background and definitions}

We start, by giving the necessary technical definitions. 
Here, we consider $f$-divergences as defined in~\cite{hirche2023quantum} for $f\in\cF:=\{f\; \text{convex}, f(1)=0  \}$ as,
\begin{align}
    &D_f(\rho\|\sigma) \nonumber \\
    &:= \int_1^\infty f''(\gamma) E_\gamma(\rho\|\sigma) + \gamma^{-3} f''(\gamma^{-1})E_\gamma(\sigma\|\rho)\, \d \gamma ,
\end{align}
where $E_\gamma(\rho\|\sigma) = \tr(\rho-\gamma\sigma)_+$ is the quantum hockey stick divergence. For $\gamma=1$, this is the trace distance $E_1(\rho\|\sigma)=\frac12\|\rho-\sigma\|_1$. We are particularly interested in the contraction of $f$-divergences under quantum channels.
Contraction coefficients give a strong version of data processing inequalities by quantifying the decrease in distinguishability.
We define the contraction coefficient for the $f$-divergence for a quantum channel $\cA$ as
\begin{align}
\eta_f(\cA) :=  \sup_{\rho,\sigma} \frac{D_f(\cA(\rho) \| \cA(\sigma))}{D_f(\rho\|\sigma)}. 
\end{align}
Clearly we have $0\leq \eta_f(\cA) \leq 1$. By definition the contraction coefficients give us the best channel dependent constants such that
\begin{align}
    D_f(\cA(\rho) \| \cA(\sigma)) \leq \eta_f(\cA) D_f(\rho\|\sigma) \qquad\forall\rho,\sigma.
\end{align}
An important special case is that of the relative entropy, $D(\rho\|\sigma)=\tr\rho(\log\rho-\log\sigma)$. We define,  
\begin{align}
    \eta_\Re(\cA) := \eta_{x\log x}(\cA) = \sup_{\rho,\sigma} \frac{D(\cA(\rho) \| \cA(\sigma))}{D(\rho\|\sigma)}. 
\end{align}

Besides the relative entropy we are also interested in the contraction of the trace distance,
\begin{align}
    \eta_{\tr}(\cA) = \sup_{\rho,\sigma} \frac{E_1(\cA(\rho) \| \cA(\sigma))}{E_1(\rho\|\sigma)},
\end{align}
which is also often called the Dobrushin coefficient. While contraction coefficients can be hard to compute, $\eta_{\tr}(\cA)$ has a comparably simpler form~\cite{ruskai1994beyond}, 
\begin{align}
    \eta_{\tr}(\cA) = \sup_{\ket{\Psi}\perp\ket{\Phi}} E_1(\cA(\ket{\Psi}\bra{\Psi}) \| \cA(\ket{\Phi}\bra{\Phi})),
\end{align}
where the optimization is over two orthogonal pure states. Nevertheless, even in this case, it is not known whether the quantity can be efficiently computed. For a broader overview of the topic of contraction coefficients we refer to~\cite{hirche2022contraction,hiai2016contraction,hirche2023quantum}.
Some results proven in~\cite{hirche2023quantum} will be useful later. For example, we have the inequality, 
 \begin{align}
         \eta_f(\cA) \leq \eta_{\tr}(\cA), 
 \end{align}
and for every two operator convex functions $f,g$, we have
    \begin{align}
        \eta_f(\cA) &= \eta_{g}(\cA). \label{Eq:f-x2-contraction-eq}
    \end{align}
This includes in particular the relative entropy. 

Furthermore, we call a channel $\cM$ less noisy than a channel $\cN$, denoted $\cM \sln \cN$, if
\begin{align}
    I(U:B)_{\cM(\rho)} \geq I(U:B')_{\cN(\rho)}\qquad\forall \rho_{UA}, \label{Eq:less-noisy-MI}
\end{align}
where $I(A:B)_{\rho_{AB}}=D(\rho_{AB}\|\rho_A\otimes \rho_B)$ is the quantum mutual information and $\rho_{UB}=(\id_U\otimes\cM)(\rho_{UA})$, $\rho_{UB'}=(\id_U\otimes\cN)(\rho_{UA})$ with $\rho_{UA}$ a classical-quantum state of the form 
\begin{align}
    \rho_{UA} = \sum_u p(u) |u\rangle\langle u| \otimes \rho^u_A. 
\end{align}

For any two operator convex functions $f$ and $g$, we have~\cite{hirche2023quantum}, 
\begin{align}
    \eta_f(\cN) = \sup_{\rho_{UA}} \frac{I_g(U:B)}{I_g(U:A)}, \label{Eq:etaf-MI-OC}
\end{align}
where $I_g(A:B)_{\rho_{AB}}=D_g(\rho_{AB}\|\rho_A\otimes \rho_B)$ and the supremum is over all classical-quantum states $\rho_{UA}$. 

The quantum erasure channel $\cE_{\epsilon}$ with erasure probability $\epsilon$, where $\cH_{B}=\cH_A\oplus \CC$, is defined for any $X\in\cB(\cH_A)$ as,
\begin{align}
\cE_{\epsilon}(X):=(1-\epsilon)X+\epsilon|e\rangle\langle e|\,,
\end{align}
where $|e\rangle\perp\cH_A$. The above concepts are further connected via the following observation. 

\begin{proposition}[Proposition II.5,~\cite{hirche2022contraction}]\label{Prop:erasureContraction}
	Let $\cN$ be a quantum channel and $\epsilon\in[0,1]$. Then the following are equivalent:
	\begin{itemize}
	\item[(i)] $\cE_{\epsilon}\sln\cn$.
		\item[(ii)] $\eta_{D}(\cn)\le \eta_{D}(\cE_{\epsilon})=(1-\epsilon)$.
	\end{itemize}
\end{proposition}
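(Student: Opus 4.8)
The plan is to route both implications through the mutual-information characterization of the contraction coefficient, Equation~\eqref{Eq:etaf-MI-OC}, specialized to the relative entropy. Since $x\log x$ is operator convex, taking $f=g=x\log x$ there gives
\begin{align}
    \eta_D(\cN) = \sup_{\rho_{UA}} \frac{I(U:B')_{\cN(\rho)}}{I(U:A)_\rho},
\end{align}
where the supremum is over classical--quantum states and $I$ is the ordinary quantum mutual information. By definition of the supremum, this is equivalent to the family of inequalities $I(U:B')_{\cN(\rho)} \le \eta_D(\cN)\, I(U:A)_\rho$ holding for every $\rho_{UA}$ --- which is exactly the shape of the less-noisy condition~\eqref{Eq:less-noisy-MI}. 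So the whole statement will reduce to identifying $\eta_D(\cE_\epsilon)=1-\epsilon$ and, more importantly, to showing that the erasure channel acts on mutual information by an exact scalar.

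The key computation I would carry out first is that, for every classical--quantum state $\rho_{UA}$,
\begin{align}
    I(U:B)_{\cE_\epsilon(\rho)} = (1-\epsilon)\, I(U:A)_\rho.
\end{align}
The reason is structural: the output lives in $\cH_B = \cH_A \oplus \CC$, so the projective measurement distinguishing the support $\cH_A$ from the erasure flag $\ket{e}$ reveals a classical ``erased / not erased'' bit $J$ that is a function of $B$. Since measuring $J$ is lossless, $I(U:B) = I(U:BJ) = I(U:J) + I(U:B|J)$ by the chain rule. Here the erasure probability is $\epsilon$ independently of $u$, so $J$ is independent of $U$ and $I(U:J)=0$; conditioned on ``not erased'' (probability $1-\epsilon$) the $B$-system carries $\rho^u_A$ and contributes $I(U:A)_\rho$, while conditioned on ``erased'' it carries the $U$-independent state $\proj{e}$ and contributes $0$. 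Feeding this identity into the characterization above immediately yields $\eta_D(\cE_\epsilon)=\sup_\rho (1-\epsilon) I(U:A)_\rho / I(U:A)_\rho = 1-\epsilon$, the ratio being the constant $1-\epsilon$ on every state with $I(U:A)_\rho>0$.

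With these two ingredients the equivalence is bookkeeping. For (i)$\Rightarrow$(ii), the hypothesis $\cE_\epsilon \sln \cN$ reads $I(U:B)_{\cE_\epsilon(\rho)} \ge I(U:B')_{\cN(\rho)}$ for all $\rho_{UA}$; substituting the erasure identity gives $(1-\epsilon) I(U:A)_\rho \ge I(U:B')_{\cN(\rho)}$, and dividing by $I(U:A)_\rho$ and taking the supremum produces $\eta_D(\cN)\le 1-\epsilon$. For (ii)$\Rightarrow$(i), the bound $\eta_D(\cN)\le 1-\epsilon$ together with the characterization gives $I(U:B')_{\cN(\rho)} \le \eta_D(\cN)\, I(U:A)_\rho \le (1-\epsilon) I(U:A)_\rho = I(U:B)_{\cE_\epsilon(\rho)}$ for every $\rho_{UA}$, which is precisely $\cE_\epsilon \sln \cN$. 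States with $I(U:A)_\rho=0$ are product states and make both sides vanish, so they never obstruct the argument.

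The main obstacle I anticipate is the mutual-information identity for the erasure channel: making the ``condition on the erasure flag'' step fully rigorous requires justifying that appending the classical flag $J$ neither creates nor destroys correlation with $U$, i.e. the additivity of $I(U:B)$ across the orthogonal direct-sum branches $\cH_A$ and $\CC\ket{e}$. Everything else --- the specialization of~\eqref{Eq:etaf-MI-OC} to the operator-convex case and the two implications --- is a direct translation between the sup-of-ratios form of $\eta_D$ and the for-all form of the less-noisy preorder.
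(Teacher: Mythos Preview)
The paper does not prove this proposition; it is quoted as Proposition~II.5 of~\cite{hirche2022contraction} and used without argument. Your proof is correct: the identity $I(U:B)_{\cE_\epsilon(\rho)}=(1-\epsilon)\,I(U:A)_\rho$ follows from the block-diagonal structure of the erasure output exactly as you sketch, and together with the mutual-information characterization~\eqref{Eq:etaf-MI-OC} both implications are immediate.
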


As a consequence we can write for any operator convex $f$ with $f(1)=0$, 
\begin{align}
    \eta_f(\cN) = 1 - \beta(\cN), 
\end{align}
with 
\begin{align}
    \beta(\cN):= \sup\{\epsilon : \cE_\epsilon \sln\cn \}. 
\end{align}
Here, we are interested in efficiently computable upper bounds on these contraction coefficients. 
Another partial order is that of degradability. We call a channel $\cN$ a degraded version of $\cM$, denoted $\cM \deg \cN$, if there exists another channel $\cD$, such that $\cN=\cD\circ\cM$. Clearly, as a consequence of data processing, if a channel is a degraded version of another, then it also more noisy with respect to the previous partial order. In the following, $\cM\geq\cN$ means $\cM-\cN$ is CP and whenever we compare a state $\sigma$ with a channel, we really mean $\sigma$ as the replacer channel $\cR_\sigma(\cdot) = \sigma \tr(\cdot)$. By $J(\cN)$ we denote the normalized Choi operator of $\cN$, i.e. $J(\cN)=\cN\otimes\id(\Psi)$, where $\Psi$ is the maximally entangled state. 

\section{Quantum Doeblin coefficients} \label{Sec:Doeblin-coef}

We will now investigate a quantum generalization of Doeblin coefficients. As mentioned in the introduction, we define the quantum Doeblin coefficient as
\begin{align}
    \alpha(\cN):= \sup\{\epsilon : \cE_\epsilon \deg\cn \}.
\end{align}
From the previous discussion, we can now easily see that this coefficient gives an upper bound on a wide class of contraction coefficients.
\begin{corollary}\label{Cor:alpha-beta}
    For a quantum channel $\cN$, we have, 
    \begin{align}
        \beta(\cN) \geq \alpha(\cN), 
    \end{align}
    and hence, for any operator convex $f$ with $f(1)=0$, 
    \begin{align}
    \eta_f(\cN) \leq 1 - \alpha(\cN). \label{Eq:Bound-c-f}
\end{align}
\end{corollary}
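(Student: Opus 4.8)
The plan is to show that the set of erasure probabilities certifying degradability is contained in the set certifying the less-noisy relation, so that passing to suprema immediately yields $\beta(\cN) \geq \alpha(\cN)$; the divergence bound then follows from the identity relating $\beta$ to the contraction coefficient.

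First I would establish the key implication: whenever $\cE_\epsilon \deg \cN$, we also have $\cE_\epsilon \sln \cN$. By the definition of degradability, $\cE_\epsilon \deg \cN$ means there is a channel $\cD$ with $\cN = \cD \circ \cE_\epsilon$. Fix an arbitrary classical-quantum state $\rho_{UA}$ and write $\rho_{UB} = (\id_U \otimes \cE_\epsilon)(\rho_{UA})$ and $\rho_{UB'} = (\id_U \otimes \cN)(\rho_{UA})$. Since $\cN = \cD \circ \cE_\epsilon$, we have $\rho_{UB'} = (\id_U \otimes \cD)(\rho_{UB})$, that is, $\rho_{UB'}$ is obtained from $\rho_{UB}$ by applying a channel to the output register $B$ alone. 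The data processing inequality for the quantum mutual information then gives $I(U:B)_{\rho_{UB}} \geq I(U:B')_{\rho_{UB'}}$, which is precisely the defining condition for $\cE_\epsilon \sln \cN$. This is the formal content of the remark that a degraded channel is always more noisy.

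With this implication in hand the remainder is bookkeeping. The inclusion $\{\epsilon : \cE_\epsilon \deg \cN\} \subseteq \{\epsilon : \cE_\epsilon \sln \cN\}$ holds, and taking suprema on both sides gives $\alpha(\cN) \leq \beta(\cN)$, which is the first inequality. For the second, I would invoke the previously established identity $\eta_f(\cN) = 1 - \beta(\cN)$, valid for operator convex $f$ with $f(1)=0$ by Proposition~\ref{Prop:erasureContraction}, and combine it with $\beta(\cN) \geq \alpha(\cN)$ to conclude $\eta_f(\cN) = 1 - \beta(\cN) \leq 1 - \alpha(\cN)$.

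I do not expect a serious obstacle here: the corollary is essentially a repackaging of the fact that degradability implies the less-noisy order together with the earlier relation between $\beta$ and the contraction coefficient. The only point requiring minor care is to verify that the degrading map $\cD$ acts on the output register $B$ and hence commutes with the identity on the reference $U$, so that the standard data processing inequality for mutual information applies verbatim to the bipartite state $\rho_{UB}$.
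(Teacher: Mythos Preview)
Your proposal is correct and follows essentially the same approach as the paper: degradability implies the less-noisy order via data processing, hence $\alpha(\cN)\leq\beta(\cN)$, and then the identity $\eta_f(\cN)=1-\beta(\cN)$ gives the bound. The only minor quibble is that Proposition~\ref{Prop:erasureContraction} alone establishes $\eta_D(\cN)=1-\beta(\cN)$ for the relative entropy; to extend this to arbitrary operator convex $f$ you also need the equality $\eta_f(\cN)=\eta_D(\cN)$ from Equation~\eqref{Eq:f-x2-contraction-eq}, as the paper does.
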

\begin{proof}
    The first inequality follows because degradability implies less noisy via data processing. The second inequality is implied by the results from~\cite{hirche2022contraction,hirche2023quantum} discussed above: We have,  
    \begin{align}
         \eta_f(\cN) = \eta_D(\cN) = 1-\beta(\cN) \leq 1 - \alpha(\cN),
    \end{align}
    which proves the claim.
\end{proof} 
First, we give a strengthening of Equation~\eqref{Eq:Bound-c-f}. We will later see that the following result is equivalent to~\cite[Theorem 8.17]{wolf2012quantum} after giving an alternative expression for the quantum Doeblin coefficient. For completeness, we will give a simple alternative proof below. 
\begin{lemma}\label{Lem:Ineq-trace-alpha}
    We have, 
    \begin{align}
    \eta_{\tr}(\cN) \leq 1 - \alpha(\cN). \label{Eq:Bound-c-tr}
\end{align}
\end{lemma}
\begin{proof}
    Let $\epsilon^\star$ be the optimizer in $\alpha(\cN)$, then
    \begin{align}
          &\eta_{\tr}(\cN) \\
          &= \sup_{\ket{\Psi}\perp\ket{\Phi}} E_1(\cN(\ket{\Psi}\bra{\Psi}) \| \cN(\ket{\Phi}\bra{\Phi})) \\
          &= \sup_{\ket{\Psi}\perp\ket{\Phi}} E_1(\cD\circ\cE_{\epsilon^\star}(\ket{\Psi}\bra{\Psi}) \| \cD\circ\cE_{\epsilon^\star}(\ket{\Phi}\bra{\Phi})) \\
          &\leq \sup_{\ket{\Psi}\perp\ket{\Phi}} E_1(\cE_{\epsilon^\star}(\ket{\Psi}\bra{\Psi}) \| \cE_{\epsilon^\star}(\ket{\Phi}\bra{\Phi})) \\
          &= \sup_{\ket{\Psi}\perp\ket{\Phi}} (1-\epsilon^\star)E_1(\ket{\Psi}\bra{\Psi} \| \ket{\Phi}\bra{\Phi})  \\
          &= 1-\epsilon^\star \\
          &= 1-\alpha(\cN), 
    \end{align}
    where the first two and the last two equalities are by definition, the inequality is data processing and the third equality follows from the structure of the erasure channel.  
\end{proof}
Next, we will give alternative expressions for $\alpha(\cN)$ providing the bridge to the classical definition via Equation~\eqref{Eq:C-Doeblin-min} and the quantum Doeblin minorization introduced in~\cite{wolf2012quantum}. 
\begin{proposition}\label{Prop:alpha-expressions}
    We have 
    \begin{align}
        \alpha(\cN) &= \max\{\epsilon : \exists\cD,\sigma \;\text{s.t.}\; \cN = (1-\epsilon)\cD + \epsilon \sigma  \} \\
        &= \max\{c\in[0,1] : \exists\sigma \;\text{s.t.}\; c\,\sigma \leq \cN  \}. \label{Eq:Q-Doeblin-min}
    \end{align}
\end{proposition}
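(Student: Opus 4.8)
The plan is to establish the two claimed identities as a chain of set equalities: I would show that the defining set $\{\epsilon : \cE_\epsilon \deg \cN\}$, the feasible set of the second line, and that of the third line all coincide, and then upgrade each supremum to an attained maximum by compactness. I would split this into the equivalence of the definition with the second expression, and then the (purely algebraic) equivalence of the second and third expressions under the identification $\epsilon = c$.

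I would first unpack the definition. By definition $\cE_\epsilon \deg \cN$ means $\cN = \cD \circ \cE_\epsilon$ for some channel $\cD$ on $\cB(\cH_B)$ with $\cH_B = \cH_A \oplus \CC$. Writing the erasure channel as $\cE_\epsilon(X) = (1-\epsilon)X + \epsilon\,\tr(X)\,|e\rangle\langle e|$ and using linearity of $\cD$, I get $\cN(X) = (1-\epsilon)\cD(X) + \epsilon\,\tr(X)\,\cD(|e\rangle\langle e|)$. Setting $\sigma := \cD(|e\rangle\langle e|)$, which is a state, and restricting $\cD$ to the $\cH_A$-block gives a channel $\cD'$ with $\cN = (1-\epsilon)\cD' + \epsilon\,\cR_\sigma$; this is the second expression.

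The converse direction is the one delicate step I expect to be the main obstacle: given $\cN = (1-\epsilon)\cD + \epsilon\,\cR_\sigma$, I must build a single channel on the enlarged output space $\cB(\cH_B)$ that, precomposed with $\cE_\epsilon$, reproduces $\cN$. The issue is that the range of $\cE_\epsilon$ does not span all of $\cB(\cH_B)$, so the degrading map must be defined on the off-diagonal blocks as well while staying CPTP. A clean choice is $\cD''(Y) := \cD(P_A Y P_A) + \langle e|Y|e\rangle\,\sigma$, where $P_A$ projects onto $\cH_A$; this is manifestly CP as a sum of CP maps, and it is trace-preserving because $\tr(P_A Y P_A) + \langle e|Y|e\rangle = \tr Y$. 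A direct check on $\cE_\epsilon(X)$, using $P_A \cE_\epsilon(X) P_A = (1-\epsilon)X$ and $\langle e|\cE_\epsilon(X)|e\rangle = \epsilon\,\tr X$, then recovers $\cN$, completing the equivalence with the definition.

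Finally, the equivalence of the second and third expressions is algebraic with $\epsilon = c$. From the second form, rearranging gives $\cN - \epsilon\,\cR_\sigma = (1-\epsilon)\cD$, which is CP, so $\epsilon\,\cR_\sigma \leq \cN$. Conversely, if $c\,\cR_\sigma \leq \cN$, then for $c<1$ the map $\cD := (1-c)^{-1}(\cN - c\,\cR_\sigma)$ is CP by hypothesis and trace-preserving since both $\cN$ and $\cR_\sigma$ are, hence a channel realizing the second form; the boundary case $c=1$ forces $\cN - \cR_\sigma$ to be a CP trace-annihilating map, hence zero, so $\cN = \cR_\sigma$ and the second form holds trivially. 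Since the feasible set in the third expression is compact and $c$ is continuous on it, the maximum is attained, and the equivalences transport attainment to the other two expressions.
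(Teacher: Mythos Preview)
Your proof is correct and follows essentially the same approach as the paper: unpack $\cD\circ\cE_\epsilon$ to extract $\sigma=\cD(|e\rangle\langle e|)$ for one direction, reassemble a degrading map from a given $(\cD,\sigma)$ pair using the orthogonality of $|e\rangle$ to $\cH_A$ for the other, and then translate the decomposition $\cN=(1-c)\cD+c\,\cR_\sigma$ into the CP-order condition $c\,\cR_\sigma\leq\cN$ via the standard equivalence. Your version is somewhat more explicit than the paper's---you write down the degrading map $\cD''(Y)=\cD(P_A Y P_A)+\langle e|Y|e\rangle\,\sigma$ and verify CPTP rather than just asserting such a map exists, you treat the $c=1$ boundary case separately, and you add a compactness argument for attainment of the maximum---but these are elaborations of the same argument rather than a different route.
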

\begin{proof}
    We start by proving the first equality. We see directly that 
    \begin{align}
        \cD\circ\cE_\epsilon = (1-\epsilon) \cD + \epsilon D(|\epsilon\rangle\langle\epsilon|). 
    \end{align}
    Choosing $\sigma=\cD(|\epsilon\rangle\langle\epsilon|)$ we get the $\leq$ direction of our statement. Similarly we can construct a $\cD'$ such that $\cD'(\rho)=\cD(\rho)$ and $\cD'(|\epsilon\rangle\langle\epsilon|)=\sigma$ which is always possible because the erasure state is orthogonal to the input space. This proves the $\geq$ direction and concludes the argument. 
    Now, we prove the second equality. 
    We start with a standard argument, 
    \begin{align}
        &c \sigma \leq \cN \label{Eq:order-1}\\
        &\Leftrightarrow \cN - c\sigma \in CP \\
        &\Leftrightarrow \exists\cD \;\text{s.t.}\; \cN - c\sigma =(1-c)\cD \\
        &\Leftrightarrow \exists\cD \;\text{s.t.}\; \cN =(1-c)\cD + c\sigma\label{Eq:order-4}
    \end{align}
    which essentially follows from the definition of the CP order on channels. This implies, 
    \begin{align}
        &\max\{\epsilon : \exists\cD,\sigma \;\text{s.t.}\; \cN = (1-\epsilon)\cD + \epsilon \sigma  \} \\
        %&= (\min\{ \lambda: \exists\sigma \;\text{s.t.}\; \sigma \leq \lambda \cN \})^{-1}\\
        %&= \max\{ \frac1\lambda: \exists\sigma \;\text{s.t.}\; \sigma \leq \lambda \cN \}\\
        &= \max\{c\in[0,1] : \exists\sigma \;\text{s.t.}\; c\,\sigma \leq \cN  \},
    \end{align}
   and concludes the proof.  
\end{proof}
The above expressions might remind the reader of the measures of robustness typically used in quantum resource theories. 
The expression in Equation~\eqref{Eq:Q-Doeblin-min} should now be compared to the classical Doeblin minorization condition in Equation~\eqref{Eq:C-Doeblin-min}. Hence, we establish the coefficient as the maximum value in the quantum Doeblin minorization introduced in~\cite{wolf2012quantum}. 
We also remark that following the above we can write 
\begin{align}
    \alpha(\cN) = \exp{\left(- \inf_\sigma D_{\max}(\sigma\|\,\cN\,)\right)}
\end{align}
with 
\begin{align}
    D_{\max}(\cM\|\cN)= \log\min\{\lambda : \cM\leq\lambda\cN\}, 
\end{align}
as defined in~\cite[Definition 19]{diaz2018using}, see also~\cite[Remark 13]{wilde2020amortized}. From this formulation it also becomes evident that the quantum Doeblin coefficient is similar to a quantity defined in~\cite[Theorem 21]{mishra2023optimal}. That quantity found an application as lower bound on the minimum error probability of antidistinguishability. Both coincide for classical quantum channels. 

From the preceding discussion it is now easy to see that we can express $\alpha(\cN)$ as a semidefinite program (SDP) which can be efficiently computed~\cite{boyd2004convex}. 
\begin{corollary}\label{Cor:alpha-SDP}
    Quantum Doeblin coefficients can be expressed as the following SDP, 
    \begin{align}
        \alpha(\cN) = \max_{\substack{
        \hat\sigma\geq 0 \\ \hat\sigma \otimes \frac{\Id}{d} \leq J(\cN) }} \tr\hat\sigma. 
    \end{align}
\end{corollary}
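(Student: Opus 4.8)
The plan is to take the minorization characterization already proved in Proposition~\ref{Prop:alpha-expressions}, namely $\alpha(\cN)=\max\{c\in[0,1]:\exists\sigma\text{ s.t. }c\,\sigma\leq\cN\}$, in which $\sigma$ is a density operator regarded as the replacer channel $\cR_\sigma$ and $\leq$ is the CP order on channels, and to translate this inequality \emph{between channels} into a semidefinite inequality \emph{between Choi operators}. This is essentially the entire content of the corollary, so the real work is to compute one Choi operator and to verify that the normalization constraints survive the translation.

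First I would compute the Choi operator of the replacer channel. A direct calculation gives $J(\cR_\sigma)=(\cR_\sigma\otimes\id)(\Psi)=\sigma\otimes\tr_A(\Psi)=\sigma\otimes\tfrac{\Id}{d}$, where I use that the partial trace of the maximally entangled state $\Psi$ over the input system is the maximally mixed state $\tfrac{\Id}{d}$ on the reference. Since the Choi map is linear and a Hermiticity-preserving map is completely positive if and only if its Choi operator is positive semidefinite, the channel inequality $c\,\sigma\leq\cN$, i.e.\ $\cN-c\,\cR_\sigma$ is completely positive, is equivalent to $J(\cN)-c\,\sigma\otimes\tfrac{\Id}{d}\geq0$, that is $c\,\sigma\otimes\tfrac{\Id}{d}\leq J(\cN)$.

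Next I would make the change of variables $\hat\sigma:=c\,\sigma$. Because $c\geq0$ and $\sigma\geq0$ we have $\hat\sigma\geq0$; the constraint becomes exactly $\hat\sigma\otimes\tfrac{\Id}{d}\leq J(\cN)$; and the objective equals $c=c\,\tr\sigma=\tr\hat\sigma$ since $\sigma$ is normalized. Conversely, any feasible $\hat\sigma\geq0$ yields an admissible pair by setting $c=\tr\hat\sigma$ and $\sigma=\hat\sigma/\tr\hat\sigma$ (with $\hat\sigma=0$ corresponding to $c=0$). Hence maximizing $c$ over the minorization problem is the same as maximizing $\tr\hat\sigma$ over $\{\hat\sigma\geq0:\hat\sigma\otimes\tfrac{\Id}{d}\leq J(\cN)\}$, which is the claimed SDP.

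The only delicate point, and the closest thing to an obstacle, is that the explicit box constraint $c\in[0,1]$ and the normalization $\tr\sigma=1$ present in Proposition~\ref{Prop:alpha-expressions} are absent from the SDP and must be shown to be automatically enforced, so that the feasible set is not silently enlarged. I expect to handle this by taking the trace of the semidefinite constraint: using $\tr(\hat\sigma\otimes\tfrac{\Id}{d})=\tr\hat\sigma$ and $\tr J(\cN)=1$ (as $\cN$ is trace preserving and $\Psi$ is normalized), positivity of both operators turns $\hat\sigma\otimes\tfrac{\Id}{d}\leq J(\cN)$ into $\tr\hat\sigma\leq1$. Thus no constraint is lost in passing to $\hat\sigma$, and the two optimizations coincide, giving the stated SDP.
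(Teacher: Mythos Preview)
Your proposal is correct and follows essentially the same route as the paper: translate the minorization characterization into a Choi-operator inequality (using $J(\cR_\sigma)=\sigma\otimes\tfrac{\Id}{d}$), substitute $\hat\sigma=c\sigma$, and observe that the constraint $c\in[0,1]$ is automatically enforced since $\tr\hat\sigma\leq\tr J(\cN)=1$. If anything, your write-up is slightly more explicit than the paper's in handling the reverse direction of the change of variables.
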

\begin{proof}
    We start by noting that to confirm complete positivity it is sufficient to check the Choi operators, 
    \begin{align}
        \alpha(\cN) &= \max\{c\in[0,1] : \exists\sigma \;\text{s.t.}\; c\,\sigma \leq \cN  \} \\
        &= \max\{c\in[0,1] : \exists\sigma \;\text{s.t.}\; c\,\sigma\otimes \frac{\Id}{d} \leq J(\cN)  \}.
    \end{align}
    We then introduce $\hat\sigma=c\sigma$ and note 
    \begin{align}
        0\leq c = \tr\hat\sigma = \tr\sigma\otimes \frac{\Id}{d} \leq \tr J(\cN) =1.
    \end{align}
    Therefore we can simplify the expression by eliminating $c$. This concludes the proof. 
\end{proof}

\subsection{Examples}
Knowing that we can compute the quantity, it is time for some examples. A common standard example is the depolarizing channel for $0\leq p\leq1$,
\begin{align}
    \cD_p(\cdot) = (1-p) \id(\cdot) + p\frac{\Id}{d}. 
\end{align}
Checking the expression in Equation~\eqref{Eq:Q-Doeblin-min} we can choose $\sigma=\frac\Id{d}$ and note that
\begin{align}
    c\frac\Id{d} \leq  (1-p) \id(\cdot) + p\frac{\Id}{d}
\end{align}
always holds for $c\leq p$, implying $\alpha(\cD_p)\geq p$. It is also intuitively clear that one cannot do better. We can also make this more formal, noting that by Lemma~\ref{Lem:Ineq-trace-alpha}, we have 
\begin{align}
    \alpha(\cD_p) \leq 1 - \eta_{\tr}(\cD_p) = p, 
\end{align}
where the final equality follows from a direct calculation. In summary, 
\begin{align}
    \alpha(\cD_p) = 1 - \eta_{\tr}(\cD_p) = p. 
\end{align}
We note however, that the depolarizing channel is more generally a valid quantum channel for all values $0\leq p\leq \frac{d^2}{d^2-1}$. For $1 < p\leq \frac{d^2}{d^2-1}$ we then have $\eta_{\tr}(\cD_p)=p-1$ and find numerically that in this range
\begin{align}
    \alpha(\cD_p) < 1 - \eta_{\tr}(\cD_p) = 2-p. 
\end{align}
In the following section we will present an improved bound that remedies this discrepancy. 

Next, consider the generalized amplitude damping channel, $\cA_{p,\eta}$, given by the Kraus operators~\cite{nielsen2000quantum},
\begin{align}
    A_1 &= \sqrt{p} \begin{bmatrix} 1 & 0 \\ 0 & \sqrt{\eta} \end{bmatrix}& A_2&=\sqrt{p}\begin{bmatrix}
        0 & \sqrt{1-\eta} \\ 0 & 0
    \end{bmatrix} \nonumber\\
    A_3&=\sqrt{1-p}\begin{bmatrix} \sqrt{\eta} & 0 \\ 0 & 1\end{bmatrix}& A_4&=\sqrt{1-p}\begin{bmatrix} 0 & 0 \\ \sqrt{1-\eta} & 0\end{bmatrix},
\end{align}
where $p\in[0,1]$ represents the dissipation to the environment and $\eta\in[0,1]$ is related to how much the input mixes with the environment. This is a somewhat more interesting example as it is less clear how to compute $\eta_{\tr}(\cA_{p,\eta})$.  The coefficient $\alpha(\cA_{p,\eta})$ is depicted in Figure~\ref{fig:GAD-alpha-1}. By virtue of the SDP formulation, generating numerical values is a matter of seconds on a standard laptop. We will revisit these and other examples later on. 

\begin{figure}
    \centering
     \begin{tikzpicture}
  \node (img)  { 
    \includegraphics[trim={4.25cm 10cm 4.25cm 10cm},clip,width=0.98\linewidth]{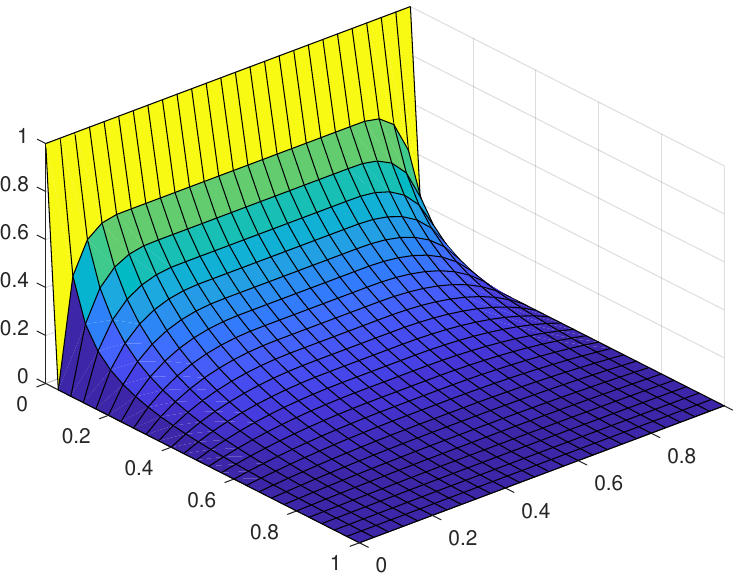}};
    \node[below=of img,  node distance=0cm, yshift=1.9cm,xshift=2.2cm] {$p$};
    \node[below=of img,  node distance=0cm, yshift=1.9cm,xshift=-2.2cm] {$\eta$};
  %\node[left=of img, node distance=0cm, rotate=90, anchor=center,yshift=-0.7cm,font=\color{red}] {y-axis};
 \end{tikzpicture}
    \caption{Plot of the Doeblin coefficient $\alpha(\cA_{p,\eta})$ for $p\in[0,1]$ and $\eta\in[0,1]$.}
    \label{fig:GAD-alpha-1}
\end{figure}

\subsection{Transpose quantum Doeblin coefficients}

Our definition of the quantum Doeblin coefficient is obviously linked to degradability. To get better bounds on contraction coefficients it will be useful to vary this approach a bit. The concept of transpose degradability was defined in~\cite{singh2022detecting}. We refer also to~\cite{bradler2010conjugate} for the related idea of conjugate degradability. Let $T$ be the transpose map, which is well known to be positive but not completely positive. We call a channel $\cN$ a transpose degraded version of $\cM$, denoted $\cM \tdeg \cN$, if there exists another channel $\cD$, such that 
\begin{align}
    T\circ\cN=\cD\circ\cM. \label{Eq:t-deg}
\end{align}
With this we can define the transpose quantum Doeblin coefficient,
\begin{align}
    \alpha^T(\cN) &:= \sup\{\epsilon : \cE_\epsilon \tdeg\cn \} \\
    &= \sup\{\epsilon : \cE_\epsilon \deg T\circ\cn \}. 
\end{align}
From the definition of transpose degradability in Equation~\eqref{Eq:t-deg}, we see that existence of a degrading quantum channel $\cD$ places an immediate restriction on the applicability of this coefficient. That is, $\cN$ has to be a PPT channel, i.e. a channel for which $T\circ\cN$ is completely positive. For convenience, we can set $\alpha^T(\cN)=-\infty$ whenever $\cN$ is not PPT. 

The above definition is motivated by the following lemma. 
\begin{lemma}\label{Lem:Ineq-trace-talpha}
    We have, 
    \begin{align}
    \eta_{\tr}(\cN) \leq 1 - \alpha^T(\cN). \label{Eq:Bound-ct-tr}
    \end{align}
\end{lemma}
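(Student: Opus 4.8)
The plan is to mimic the proof of Lemma~\ref{Lem:Ineq-trace-alpha} almost verbatim, exploiting the fact that the transpose map $T$ is an isometry for the trace distance. First I would let $\epsilon^\star$ be the optimizer in $\alpha^T(\cN)$, so that by the definition of transpose degradability there exists a channel $\cD$ with $T\circ\cN = \cD\circ\cE_{\epsilon^\star}$. Note that if $\cN$ is not PPT then $\alpha^T(\cN)=-\infty$ and the bound~\eqref{Eq:Bound-ct-tr} holds trivially, so I may assume $\cN$ is PPT and the degrading channel $\cD$ genuinely exists.

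The key observation is that the transpose map preserves the trace distance: for any $X$ we have $\|X^T\|_1 = \|X\|_1$, since transposition preserves the singular values (equivalently, eigenvalues of $X^\dagger X$). Consequently $E_1(T\circ\cN(\rho)\|T\circ\cN(\sigma)) = \frac12\|(\cN(\rho)-\cN(\sigma))^T\|_1 = \frac12\|\cN(\rho)-\cN(\sigma)\|_1 = E_1(\cN(\rho)\|\cN(\sigma))$. This lets me freely insert a $T$ without changing the quantity being optimized.

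The chain of (in)equalities then runs exactly as before. Starting from the Ruskai characterization, I write
\begin{align}
          &\eta_{\tr}(\cN) \nonumber\\
          &= \sup_{\ket{\Psi}\perp\ket{\Phi}} E_1(\cN(\ket{\Psi}\bra{\Psi}) \| \cN(\ket{\Phi}\bra{\Phi})) \\
          &= \sup_{\ket{\Psi}\perp\ket{\Phi}} E_1(T\circ\cN(\ket{\Psi}\bra{\Psi}) \| T\circ\cN(\ket{\Phi}\bra{\Phi})) \\
          &= \sup_{\ket{\Psi}\perp\ket{\Phi}} E_1(\cD\circ\cE_{\epsilon^\star}(\ket{\Psi}\bra{\Psi}) \| \cD\circ\cE_{\epsilon^\star}(\ket{\Phi}\bra{\Phi})) \\
          &\leq \sup_{\ket{\Psi}\perp\ket{\Phi}} E_1(\cE_{\epsilon^\star}(\ket{\Psi}\bra{\Psi}) \| \cE_{\epsilon^\star}(\ket{\Phi}\bra{\Phi})) \\
          &= (1-\epsilon^\star) = 1-\alpha^T(\cN),
\end{align}
where the second equality is the trace-distance invariance of $T$, the third substitutes the transpose-degradation relation $T\circ\cN = \cD\circ\cE_{\epsilon^\star}$, the inequality is data processing under the channel $\cD$, and the remaining steps use the explicit action of the erasure channel exactly as in Lemma~\ref{Lem:Ineq-trace-alpha}.

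I expect no serious obstacle here; the only point requiring a moment's care is justifying that the trace norm is invariant under transposition, which I would state as a one-line fact about singular values. Everything else is a transparent adaptation of the previous lemma, and the structural role played by degradability there is played here by transpose degradability combined with the isometry property of $T$.
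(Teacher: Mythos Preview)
Your proposal is correct and follows essentially the same approach as the paper's own proof: invoke the invariance of the trace distance under the transpose map, substitute the transpose-degradation relation $T\circ\cN=\cD\circ\cE_{\epsilon^\star}$, and then proceed exactly as in Lemma~\ref{Lem:Ineq-trace-alpha}. Your write-up is in fact slightly more explicit, since you justify the transpose-invariance of $\|\cdot\|_1$ and dispose of the non-PPT case up front.
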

\begin{proof}
The proof is essentially identical to that of Lemma~\ref{Lem:Ineq-trace-alpha}. It differs only in adding the invariance of the trace distance under transpose, leading to
    \begin{align}
          &\eta_{\tr}(\cN) \\
          &= \sup_{\ket{\Psi}\perp\ket{\Phi}} E_1(\cN(\ket{\Psi}\bra{\Psi}) \| \cN(\ket{\Phi}\bra{\Phi})) \\          
          &= \sup_{\ket{\Psi}\perp\ket{\Phi}} E_1(T\circ\cN(\ket{\Psi}\bra{\Psi}) \| T\circ\cN(\ket{\Phi}\bra{\Phi})) \\
          &= \sup_{\ket{\Psi}\perp\ket{\Phi}} E_1(\cD\circ\cE_{\epsilon^\star}(\ket{\Psi}\bra{\Psi}) \| \cD\circ\cE_{\epsilon^\star}(\ket{\Phi}\bra{\Phi})),
    \end{align}
    where $\epsilon^\star$ is the optimizer of $\alpha^T(\cN)$. The proof is then continued as in Lemma~\ref{Lem:Ineq-trace-alpha}.
\end{proof}
For a PPT channel $\cN$ this hence gives an improved bound,
    \begin{align}
        \eta_{\tr}(\cN) \leq 1 - \max\{\alpha(\cN),\alpha^T(\cN)\}. \label{Eq:Bound-max-tr}
    \end{align}
It is also easy to see that $\alpha^T(\cN)$ has an SDP formulation very similar to that of $\alpha(\cN)$, 
    \begin{align}
        \alpha^T(\cN) = \max_{\substack{
        \hat\sigma\geq 0 \\ \hat\sigma \otimes \frac{\Id}{d} \leq J(T\circ\cN) }} \tr\hat\sigma.  
    \end{align}

\begin{figure}
    \centering
    \begin{tikzpicture}
  \node (img)  { \includegraphics[width=0.97\linewidth]{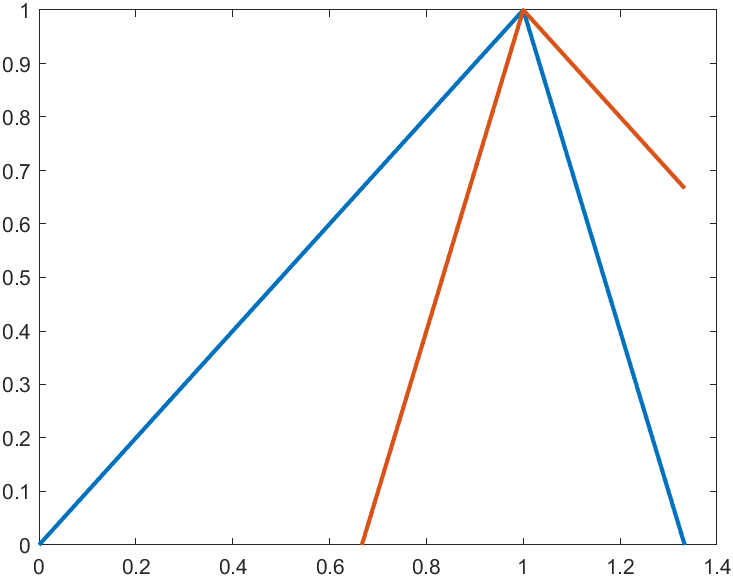}};
  \node[below=of img,  node distance=0cm, yshift=1cm] {$p$};
  %\node[left=of img, node distance=0cm, rotate=90, anchor=center,yshift=-0.7cm,font=\color{red}] {y-axis};
 \end{tikzpicture}
    \caption{Plot of the Doeblin coefficients $\alpha(\cD_p)$ (blue) and $\alpha^T(\cD_p)$ (red) for $p\in[0,\frac43]$. Recall that the qubit depolarizing channel is PPT for $p\geq\frac23$.}
    \label{fig:Dp-T}
\end{figure}

With this, we return to the previous example of the depolarizing channel. We plot $\alpha(\cD_p)$ and $\alpha^T(\cD_p)$ for the qubit depolarizing channel in Figure~\ref{fig:Dp-T}. This channel is completely positive for $p\in[0,\frac43]$ and PPT for $p\in[\frac23,\frac43]$. Evidently, for $p\geq1$ we get a better result from $\alpha^T(\cD_p)$ and we recover a tight bound on the contraction coefficient $\eta_{\tr}(\cD_p)$ via Equation~\eqref{Eq:Bound-max-tr}. 
Finally, we remark that a similar observation holds for the transpose-depolarizing channel, 
    \begin{align}
        \cD^T_q(\cdot) = (1-q) T(\cdot) + q\frac{\Id}{d}, 
    \end{align}
which is completely positive for $q\in[\frac{d}{d+1},\frac{d}{d-1}]$ and PPT for $q\in[\frac{d}{d+1},\frac{d^2}{d^2-1}]$. Here the roles of $\alpha$ and $\alpha^T$ are essentially interchanged. Note that $\cD^T_{\frac{d}{d-1}}$ is the often considered Werner-Holevo channel~\cite{werner2002counterexample}. 

\subsection{Tighter bound from a relaxed constraint}\label{Sec:hermitian}

Revisiting~\cite[Theorem 8.17]{wolf2012quantum}, we notice that the result there is actually formulated for a slightly more general setting in which all channels are only required to be hermiticity-preserving. While we are usually interested in completely positive maps, this interestingly still leads to a tighter bound. Building on Proposition~\ref{Prop:alpha-expressions}, we define
    \begin{align}
        \alpha^H(\cN) &:= \max\{c\in[0,1] : \exists X \;\text{s.t.}\; c\,X \leq \cN  \}\\ 
        &= \max\{\epsilon : \exists\cD,X \;\text{s.t.}\; \cN = (1-\epsilon)\cD + \epsilon X  \} . \label{Eq:Q-Doeblin-H}
    \end{align}
where $X$ is now any hermitian operator with trace one. In particular, we do not require it to be positive. A similar relaxiation has also proven useful in~\cite{mishra2023optimal}. The following is then a consequence of~\cite[Theorem 8.17]{wolf2012quantum}, but we also give a shorter proof.
\begin{lemma}\label{Lem:Ineq-trace-alpha-H}
    We have, 
    \begin{align}
    \eta_{\tr}(\cN) \leq 1 - \alpha^H(\cN). \label{Eq:Bound-c-tr-H}
\end{align}
\end{lemma}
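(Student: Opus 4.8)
The plan is to follow the structure of the proof of Lemma~\ref{Lem:Ineq-trace-alpha}, but to feed in the convex-decomposition characterization of the coefficient from Equation~\eqref{Eq:Q-Doeblin-H} in place of the composition with an erasure channel. Concretely, I would let $\epsilon^\star=\alpha^H(\cN)$ be the optimizer, so that $\cN=(1-\epsilon^\star)\cD+\epsilon^\star\cR_X$, where $\cR_X(\cdot)=X\tr(\cdot)$ is the replacer map of a Hermitian trace-one operator $X$ that is \emph{not} required to be positive, and $\cD$ is a quantum channel.

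My first task is to check that $\cD$ really is CPTP, since this is the only place the relaxation could cause trouble. The constraint $\epsilon^\star X\le\cN$ in the CP order means $\cN-\epsilon^\star\cR_X$ is completely positive; because $\cN$ and $\cR_X$ are both trace preserving (the latter since $\tr X=1$), this remainder rescales the trace by exactly $1-\epsilon^\star$, so $\cD=(1-\epsilon^\star)^{-1}(\cN-\epsilon^\star\cR_X)$ is a legitimate channel and data processing is available for it.

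The decisive observation is that the non-positive part drops out on differences of states. For any states $\rho,\sigma$ the operator $\rho-\sigma$ is traceless, so $\cR_X(\rho-\sigma)=X\tr(\rho-\sigma)=0$ regardless of the sign of $X$, and hence
\[
\cN(\rho)-\cN(\sigma)=(1-\epsilon^\star)\bigl(\cD(\rho)-\cD(\sigma)\bigr).
\]
Taking trace norms and applying data processing for $\cD$ then gives
\[
E_1(\cN(\rho)\|\cN(\sigma))=(1-\epsilon^\star)\,E_1(\cD(\rho)\|\cD(\sigma))\le(1-\epsilon^\star)\,E_1(\rho\|\sigma),
\]
and dividing and taking the supremum over $\rho,\sigma$ yields $\eta_{\tr}(\cN)\le 1-\epsilon^\star=1-\alpha^H(\cN)$. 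Equivalently, one could restrict to orthogonal pure states exactly as in Lemma~\ref{Lem:Ineq-trace-alpha} and use $E_1(\proj{\Psi}\|\proj{\Phi})=1$.

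I do not anticipate a genuine obstacle: the whole argument hinges on the elementary fact that the replacer $\cR_X$ annihilates traceless inputs, which is why positivity of $X$ is never needed. The one point demanding care is the bookkeeping in the second paragraph, namely verifying that relaxing $X$ from a density operator to a signed Hermitian operator still leaves $\cD$ completely positive and trace preserving, so that the single use of data processing is justified.
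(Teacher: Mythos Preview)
Your proposal is correct and follows essentially the same route as the paper: both use the decomposition $\cN=(1-\epsilon^\star)\cD+\epsilon^\star\cR_X$ from Equation~\eqref{Eq:Q-Doeblin-H}, observe that the replacer term cancels on differences of states, and then invoke data processing for the channel $\cD$. Your write-up is in fact slightly more careful than the paper's in explicitly verifying that $\cD$ is CPTP despite $X$ not being positive, which is indeed the only place the relaxation needs attention.
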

\begin{proof}
    Let $\epsilon^\star$ be the optimizer in $\alpha^H(\cN)$, then
    \begin{align}
          &\eta_{\tr}(\cN) \\
          &= \sup_{\ket{\Psi}\perp\ket{\Phi}} E_1(\cN(\ket{\Psi}\bra{\Psi}) \| \cN(\ket{\Phi}\bra{\Phi})) \\
          &= \sup_{\ket{\Psi}\perp\ket{\Phi}} E_1((1-\epsilon^\star)\cD(\ket{\Psi}\bra{\Psi}) \| (1- \epsilon^\star) \cD(\ket{\Phi}\bra{\Phi})) \\
          &= (1-\epsilon^\star) \sup_{\ket{\Psi}\perp\ket{\Phi}} E_1(\cD(\ket{\Psi}\bra{\Psi}) \| \cD(\ket{\Phi}\bra{\Phi})) \\
          &\leq (1-\epsilon^\star) \sup_{\ket{\Psi}\perp\ket{\Phi}} E_1(\ket{\Psi}\bra{\Psi} \| \ket{\Phi}\bra{\Phi}) \\
          &= 1-\epsilon^\star \\
          &= 1-\alpha(\cN), 
    \end{align}
    where equalities are by definition or properties of the trace distance. The inequality is from data processing. 
\end{proof}
Clearly, we have, 
    \begin{align}
    \eta_{\tr}(\cN) \leq 1 - \alpha^H(\cN)\leq 1 - \alpha(\cN). 
\end{align}
The relaxation does not change the SDP property. To be precise, we have the following. 
\begin{corollary}\label{Cor:alpha-H-SDP}
    The relaxed quantum Doeblin coefficients can be expressed as the following SDP, 
    \begin{align}
        \alpha^H(\cN) = \max_{\substack{
        \hat X\;\text{hermitian} \\ \hat X \otimes \frac{\Id}{d} \leq J(\cN) }} \tr\hat X. 
    \end{align}
\end{corollary}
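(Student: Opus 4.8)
The plan is to mirror the proof of Corollary~\ref{Cor:alpha-SDP}, tracking the one place where dropping positivity of the minorizing operator forces a small extra argument. I start from the first expression in the definition~\eqref{Eq:Q-Doeblin-H}, namely $\alpha^H(\cN) = \max\{c\in[0,1] : \exists X \text{ hermitian}, \tr X = 1,\; cX \leq \cN\}$, where the comparison $cX \leq \cN$ is understood via the replacer channel $\cR_X(\cdot) = X\tr(\cdot)$.

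First I would pass to Choi operators. Since $X$ is hermitian, $\cN - c\,\cR_X$ is hermiticity-preserving, hence completely positive if and only if its normalized Choi operator is positive semidefinite. Using $J(\cR_X) = X \otimes \frac{\Id}{d}$ and linearity of the Choi map, the constraint $cX \leq \cN$ is equivalent to $cX \otimes \frac{\Id}{d} \leq J(\cN)$, exactly as in the positive case.

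Next I would introduce $\hat X := cX$, which is hermitian with $\tr\hat X = c\,\tr X = c$. This recasts feasibility as $\hat X \otimes \frac{\Id}{d} \leq J(\cN)$ with objective $\tr\hat X$, yielding the claimed SDP once the range $c\in[0,1]$ is shown to be redundant. Taking traces of $\hat X \otimes \frac{\Id}{d} \leq J(\cN)$ and using $\tr\frac{\Id}{d}=1$ together with $\tr J(\cN)=1$ gives $\tr\hat X \leq 1$, so the upper endpoint is automatic. For the lower endpoint I would note that $\hat X = 0$ is always feasible (as $J(\cN)\geq 0$), so the optimal value is at least zero and the constraint $c\geq 0$ never binds at the maximum; I may therefore drop it.

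The step I expect to need the most care — and the only genuine departure from Corollary~\ref{Cor:alpha-SDP} — is the normalization: unlike the positive case, $\tr X = 1$ cannot be read off from positivity, so I must check that the substitution $\hat X = cX$ is reversible. Given any SDP-feasible $\hat X$ with $\tr\hat X = c > 0$, setting $X := \hat X/c$ recovers a hermitian trace-one operator with $cX \leq \cN$, which establishes the matching inequality between the two optimization problems; the degenerate case $\tr\hat X \leq 0$ is dominated by $\hat X = 0$ and so does not affect the maximum. Combining this with the direction $\hat X = cX$ from an optimal $(c,X)$ gives equality of the two values and hence the stated SDP.
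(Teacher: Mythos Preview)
Your proposal is correct and follows essentially the same route as the paper: mirror the proof of Corollary~\ref{Cor:alpha-SDP} and note that the only new issue is the lower bound $\tr\hat X\geq 0$, which is handled by observing that $\hat X=0$ is always feasible. You spell out more details (the trace upper bound and the reversibility of $\hat X=cX$), but the argument and its key step are the same.
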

\begin{proof}
    The proof is almost identical to Corollary~\ref{Cor:alpha-SDP}. The only difference is that we don't automatically have $\tr\hat X\geq0$ in the claimed expression. However, we don't need to require that explicitly, because there always exists a feasible solution with $\tr\hat X = 0$. Hence, the optimal solution will not be smaller. 
\end{proof}

Revisiting our previous examples, we can also confirm that we get indeed a tangible improvement. In Figure~\ref{fig:alpha-H-comp}, we plot $1-\alpha(\cA_{p,\eta})$ and $1-\alpha^H(\cA_{p,\eta})$ for the generalized amplitude damping channel over $p$ for fixed values $\eta$. 
\begin{figure}
    \centering
        \begin{tikzpicture}
  \node (img)  {\includegraphics[width=0.95\linewidth]{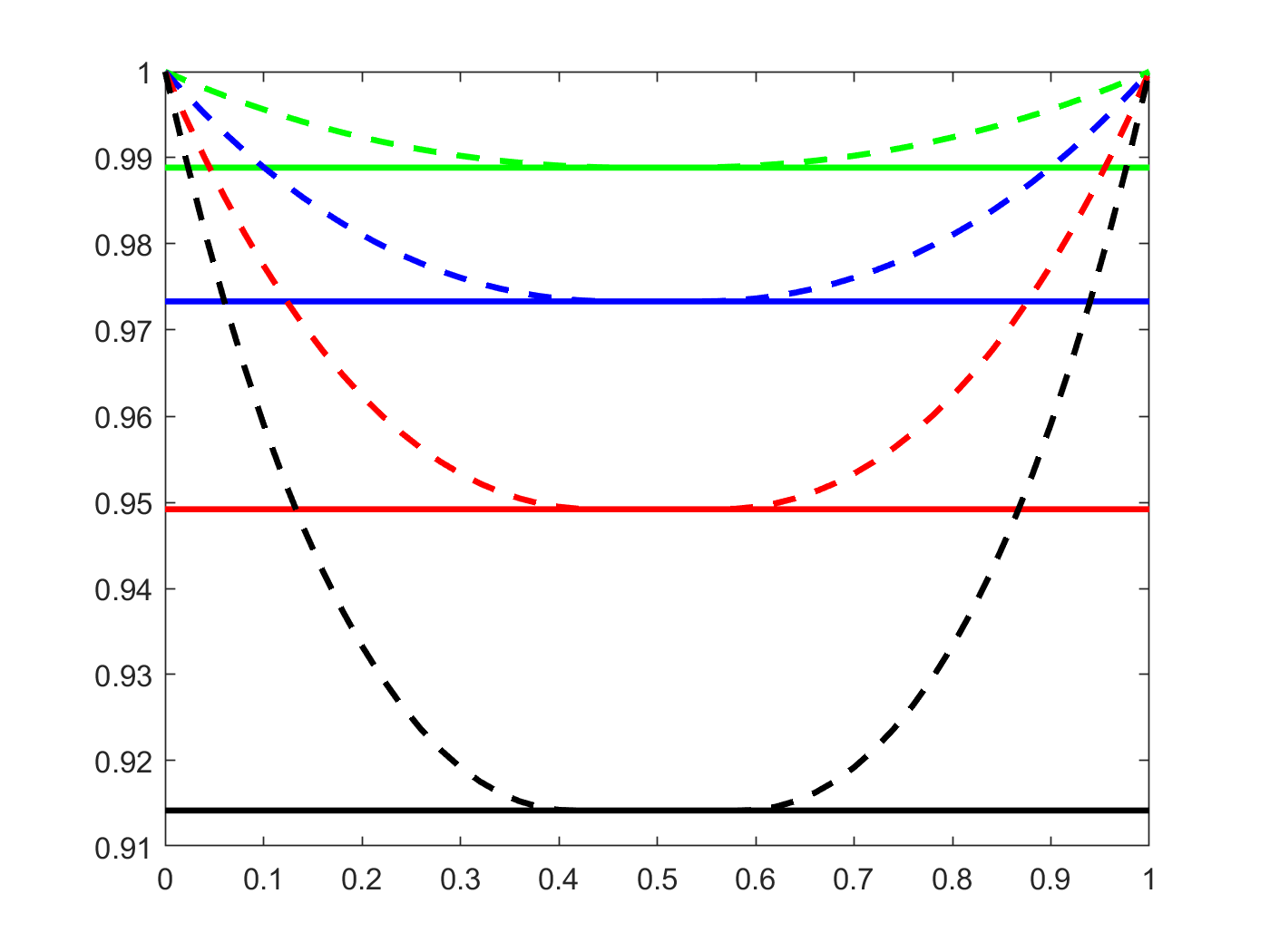}};
    \node[below=of img,  node distance=0cm, yshift=1.35cm] {$p$};
  %\node[left=of img, node distance=0cm, rotate=90, anchor=center,yshift=-0.7cm,font=\color{red}] {y-axis};
 \end{tikzpicture}
    \caption{Plot of $1-\alpha(\cA_{p,\eta})$ (dashed) and $1-\alpha^H(\cA_{p,\eta})$ (solid) over $p$ for $\eta=0.5$ (black), $0.6$ (red), $0.7$ (blue) and $0.8$ (green). } 
    \label{fig:alpha-H-comp}
\end{figure}
The difference is remarkable, in particular for values where $1-\alpha(\cA_{p,\eta})$ gives a trivial bound on the contraction coefficient. We find that $1-\alpha^H(\cA_{p,\eta})$ does never become trivial and hence detects correctly that $\eta_{\tr}(\cA_{p,\eta})<1$ for $\eta<1$. This can be crucial for many applications where we are primarily interested in whether the contraction coefficient is non-trivial. 

Finally, we remark that in some cases, e.g. the depolarizing channel, one still gets a better bound using $\alpha^T(\cN)$. Hence, the best upper bound presented in this work is
\begin{align}
     \eta_{\tr}(\cN) \leq 1 - \max\{\alpha^H(\cN),\alpha^T(\cN)\}. \label{Eq:Bound-max-tr-H}
\end{align}
One could also combine the transpose idea with the hermiticity relaxation. However, we do not know of any concrete examples where this gives a better bound. The most interesting remaining question is whether the bound in Equation~\eqref{Eq:Bound-max-tr-H} can reliably detect all cases of $\eta_{\tr}(\cN)<1$.

\subsection{Further properties}

In this section, we give a couple of properties of the quantum Doeblin coefficient that generalize those of its classical counterpart. 
\begin{lemma}\label{Lemma:Properties}
For quantum channels $\cN,\cM$, the quantum Doeblin coefficient $\alpha$ has the following properties: 
    \begin{enumerate}
        \item{(Concavity)} $\alpha(\cN)$ is concave in $\cN$, i.e. for $0\leq\lambda\leq1$, 
        \begin{align}
           \alpha(\lambda\cN + (1-\lambda)\cM) \geq \lambda \alpha(\cN) + (1-\lambda) \alpha(\cM) . 
        \end{align}
        \item{(Super-multiplicative)} We have, 
        \begin{align}
            \alpha(\cN\otimes\cM) \geq \alpha(\cN) \alpha(\cM). 
        \end{align}
        \item{(Concatenation)} We have, 
        \begin{align}
            1-\alpha(\cN\circ\cM) \leq (1-\alpha(\cN))(1-\alpha(\cM)). 
        \end{align}
    \end{enumerate}
\end{lemma}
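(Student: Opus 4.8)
The plan is to obtain all three properties directly from the equivalent characterizations in Proposition~\ref{Prop:alpha-expressions}: the minorization form $\alpha(\cN) = \max\{c\in[0,1] : \exists\sigma,\ c\sigma\leq\cN\}$ and the decomposition form $\cN = (1-\epsilon)\cD + \epsilon\cR_\sigma$. Throughout I would fix optimal data, writing $\cN = (1-a)\cD_\cN + a\,\cR_{\sigma_\cN}$ with $a = \alpha(\cN)$ and likewise $\cM = (1-b)\cD_\cM + b\,\cR_{\sigma_\cM}$ with $b=\alpha(\cM)$, where $\cD_\cN,\cD_\cM$ are channels and $\sigma_\cN,\sigma_\cM$ are states. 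The recurring mechanism is that convex combinations, tensor products, and compositions of replacer channels are again replacer channels (to an appropriately combined state), together with the two identities $\cN\circ\cR_\sigma = \cR_{\cN(\sigma)}$ and $\cR_\sigma\circ\cD = \cR_\sigma$, the latter holding for every channel $\cD$ by trace preservation.

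For concavity I would use the minorization form. From $a\sigma_\cN\leq\cN$ and $b\sigma_\cM\leq\cM$, scaling by $\lambda$ and $1-\lambda$ and adding gives $\lambda a\sigma_\cN + (1-\lambda)b\sigma_\cM \leq \lambda\cN + (1-\lambda)\cM$. Setting $c = \lambda a + (1-\lambda)b$ and $\sigma = c^{-1}(\lambda a\sigma_\cN + (1-\lambda)b\sigma_\cM)$, the operator $\sigma$ is a genuine state (a convex combination of states), so $c\sigma\leq\lambda\cN+(1-\lambda)\cM$ exhibits a feasible point and yields the claim. For super-multiplicativity I would instead expand the tensor product of the two decompositions, recognizing $\cN\otimes\cM = ab\,\cR_{\sigma_\cN\otimes\sigma_\cM} + \text{(three cross terms)}$. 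Each cross term is a tensor product of completely positive maps and is therefore completely positive, so $\cN\otimes\cM - ab\,\cR_{\sigma_\cN\otimes\sigma_\cM}$ is CP; since $\sigma_\cN\otimes\sigma_\cM$ is a state and $ab\in[0,1]$, the minorization form gives $\alpha(\cN\otimes\cM)\geq ab$.

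The concatenation bound is the one requiring genuine bookkeeping, and I expect it to be the main obstacle. I would substitute both decompositions into $\cN\circ\cM$ and simplify using the replacer identities: $\cN\circ\cR_{\sigma_\cM} = \cR_{\cN(\sigma_\cM)}$ collapses one term, while $\cR_{\sigma_\cN}\circ\cD_\cM = \cR_{\sigma_\cN}$ collapses another. Carrying this out, the term $(1-a)(1-b)\,\cD_\cN\circ\cD_\cM$ is $(1-a)(1-b)$ times a channel, and the remaining replacer terms $a(1-b)\cR_{\sigma_\cN} + b\,\cR_{\cN(\sigma_\cM)}$ carry total weight $a(1-b)+b = a+b-ab$ and combine into a single replacer channel $\cR_{\sigma'}$. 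Hence $\cN\circ\cM = (1-a)(1-b)\cD' + (a+b-ab)\cR_{\sigma'}$ with $\cD'$ a channel, and the decomposition form gives $\alpha(\cN\circ\cM)\geq a+b-ab = 1-(1-a)(1-b)$, which rearranges to the claimed inequality. The delicate point is tracking the weights so that the pure-channel part is exactly $(1-a)(1-b)$ while the replacer part is exactly the complementary $1-(1-a)(1-b)$; the two replacer identities are precisely what makes this telescoping work, and verifying them (using trace preservation of $\cD_\cM$ and linearity of $\cN$) is the only nontrivial ingredient.
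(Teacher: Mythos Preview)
Your proposal is correct and, for properties (2) and (3), essentially identical to the paper's argument: both expand the relevant product of decompositions and isolate the replacer part, using that replacer channels absorb compositions on the right via trace preservation. For property (1), your route via the minorization form $c\sigma\leq\cN$ is somewhat more direct than the paper's, which instead works with the erasure-channel formulation and explicitly constructs a degrading map $\cD$ from $\cE_{\lambda\epsilon_\cN+(1-\lambda)\epsilon_\cM}$ by prescribing its action separately on the erasure flag and on the input space; your linear-combination argument avoids this bookkeeping entirely (with the trivial edge case $c=0$ handled by $\alpha\geq 0$). Either way the content is the same: feasible minorants combine convexly.
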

\begin{proof}
 The proof can be found in Appendix~\ref{App:proofs}.
\end{proof}
We note that the second property above was previously shown for classical-quantum channels in~\cite{mishra2023optimal}. 

The above can be used to give bounds on contraction coefficients, e.g. 
\begin{align}
    \eta_{\tr}(\cN^{\otimes n}) \leq 1 - \alpha(\cN)^n.  
\end{align}
or
\begin{align}
    \eta_{\tr}(\cN\circ\dots\circ\cN) \leq (1 - \alpha(\cN))^n. 
\end{align}
That implies that even for combinations of many channels, it is numerically easy to give a bound using $\alpha(\cN)$ of the individual channels. 

\section{Reverse Doeblin coefficients}\label{Sec:rev-Doeblin}

Contraction coefficients bound how much contraction we will get at least from the application of a channel. Sometimes we might also be interested in a bound telling us that there won't be more contraction than a certain multiplicative constant. For that case, we can define expansion coefficients, 
\begin{align}
\widecheck\eta_f(\cA) :=  \inf_{\rho,\sigma} \frac{D_f(\cA(\rho) \| \cA(\sigma))}{D_f(\rho\|\sigma)},  
\end{align}
where we optimize over $\rho,\sigma$ such that $D_f(\rho\|\sigma)<\infty$. Similarly, we define $\widecheck\eta_f(\cA,\sigma)$ without the infimum over $\sigma$. For the quantum relative entropy such expansion coefficients were previously considered in~\cite{ramakrishnan2020computing}. Here, we will mainly look at the expansion coefficient for the trace distance, 
\begin{align}
\widecheck\eta_{\tr}(\cA) :=  \inf_{\rho,\sigma} \frac{E_1(\cA(\rho) \| \cA(\sigma))}{E_1(\rho\|\sigma)}.   
\end{align}
Naturally, we would like to apply our tools also to these coefficients. By virtue of data processing, the first idea would be to simply exchange the order of degradation in the Doeblin coefficient. However, in general it is not promising to find a channel $\cD$ such that $\cD\circ\cN=\cE_\epsilon$. Later, we will argue that in the classical setting it is sometimes reasonable to replace the erasure channel by a binary symmetric channel. Here, we propose for the quantum setting to use the depolarizing channel. Hence, we define the reverse Doeblin coefficient as 
\begin{align}
    \widecheck\alpha(\cN):= \inf\{p : \cN \deg\cD_p \}.
\end{align}
Clearly we have $\widecheck\alpha(\cN)\in[0,1]$ as $p\geq0$ is necessary for $\cD_p$ to be completely positive and every channel can be degraded into the fully depolarizing channel $\cD_1$. Most importantly, this coefficient lower bounds the expansion coefficient. 
\begin{lemma}\label{Lem:Ineq-trace-walpha}
    We have, 
    \begin{align}
    \widecheck\eta_{\tr}(\cN) \geq 1 - \widecheck\alpha(\cN). \label{Eq:Bound-c-tr}
\end{align}
\end{lemma}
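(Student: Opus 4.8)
The plan is to run the argument of Lemma~\ref{Lem:Ineq-trace-alpha} in the opposite direction. There the erasure channel degrades $\cN$, so data processing was applied going into $\cN$; here it is $\cN$ that degrades the depolarizing channel, so the data processing inequality should instead be applied to the degrading map $\cD$ sitting \emph{after} $\cN$.

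Concretely, I would let $p^\star=\widecheck\alpha(\cN)$ be the optimizer, so that by the definition of degradability there is a channel $\cD$ with $\cD\circ\cN=\cD_{p^\star}$. The key observation is that the depolarizing channel rescales the trace distance exactly. Since any states $\rho,\sigma$ have unit trace, the maximally mixed component cancels in the difference,
\begin{align}
    \cD_{p^\star}(\rho)-\cD_{p^\star}(\sigma)=(1-p^\star)(\rho-\sigma),
\end{align}
and hence $E_1(\cD_{p^\star}(\rho)\|\cD_{p^\star}(\sigma))=(1-p^\star)\,E_1(\rho\|\sigma)$. Combining this with data processing applied to $\cD$ gives, for every pair $\rho,\sigma$,
\begin{align}
    (1-p^\star)\,E_1(\rho\|\sigma)
    &=E_1(\cD\circ\cN(\rho)\,\|\,\cD\circ\cN(\sigma))\\
    &\leq E_1(\cN(\rho)\,\|\,\cN(\sigma)).
\end{align}
Dividing by $E_1(\rho\|\sigma)$ and taking the infimum over all $\rho,\sigma$ then yields $\widecheck\eta_{\tr}(\cN)\geq 1-p^\star=1-\widecheck\alpha(\cN)$, which is the claim.

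The one point that needs care — and the step I would treat as the main obstacle — is that speaking of ``the optimizer'' $p^\star$ presupposes that the infimum in the definition of $\widecheck\alpha(\cN)$ is attained. I would justify this by noting that the feasibility set $\{p:\cN\deg\cD_p\}$ is closed, since the existence of a degrading CP map with $\cD\circ\cN=\cD_p$ is an SDP feasibility condition, and that it is upward closed in $[0,1]$: depolarizing channels compose as $\cD_r\circ\cD_p=\cD_{p+r-rp}$, so whenever $\cN\deg\cD_p$ holds one also gets $\cN\deg\cD_{p'}$ for every $p'\geq p$ by composing the degrading map $\cD$ with a suitable $\cD_r$. Thus the set is a closed interval $[\widecheck\alpha(\cN),1]$ and the infimum is achieved. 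Should one prefer to sidestep attainment altogether, the same bound follows by a routine limiting argument over a sequence $p_n\downarrow\widecheck\alpha(\cN)$.
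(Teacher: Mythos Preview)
Your proof is correct and follows essentially the same line as the paper's: take the optimal $p^\star$ and degrading map $\cD$, apply data processing for $\cD$ to get $E_1(\cN(\rho)\|\cN(\sigma))\geq E_1(\cD_{p^\star}(\rho)\|\cD_{p^\star}(\sigma))=(1-p^\star)E_1(\rho\|\sigma)$, then divide and take the infimum. Your added discussion of why the infimum is attained is extra care that the paper itself omits.
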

\begin{proof}
Let $p^\star$ be the optimizer in $\widecheck\alpha(\cN)$ and $\cD$ the corresponding degrading map, then we have
\begin{align}
    E_1(\cN(\rho) \| \cN(\sigma)) &\geq E_1(\cD\circ\cN(\rho) \| \cD\circ\cN(\sigma)) \\
    &= E_1(\cD_{p^\star}(\rho) \| \cD_{p^\star}(\sigma)) \\
    &= (1-p^\star)\, E_1(\rho \| \sigma) \\
    &= (1-\widecheck\alpha(\cN))\, E_1(\rho \| \sigma). 
\end{align}
Hence the result follows from the definition of the expansion coefficient. 
\end{proof}
Next, we will see that also this quantity can be efficiently computed in the form of an SDP. Since, the degrading map is now applied to an arbitrary channel we do not have the same simple connection to the $D_{\max}$ relative entropy as before. We will instead use a different observation and introduce the link product from~\cite{chiribella2009theoretical},
\begin{align}
    M_{AB}\star N_{BC} := \tr_B \left[ (M_{AB}^{T_B}\otimes\Id_E)(\Id_A\otimes N_{BC})\right].
\end{align}
With that, we have, 
\begin{align}
    J(\cD\circ\cN)=J(\cD)\star J(\cN), 
\end{align}
which allows us to optimize over $J(\cD)$. This leads to the following SDP. 
\begin{corollary}\label{Cor:walpha-SDP}
    Reverse quantum Doeblin coefficients can be expressed as the following SDP, 
    \begin{align}
        \widecheck\alpha(\cN) = \min_{\substack{
        D\geq 0 \\ \tr_C D = \frac{\Id}{d_B} \\ J(\cN)\star D = (1-p)\Phi^+_{AC} + \frac{p\Id}{d_Ad_C} }} p,  
    \end{align}
    where $\Phi^+_{AC}$ is the maximally entangled state.
\end{corollary}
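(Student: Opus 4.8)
The plan is to translate the degradation condition $\cN \deg \cD_p$ into positivity and linear conditions on the Choi operator $D:=J(\cD)$ of the degrading map, and then to read off the result as an SDP. Unfolding the definition, $\widecheck\alpha(\cN)=\inf\{p:\exists\,\text{channel }\cD\text{ with }\cD\circ\cN=\cD_p\}$. The key reduction is the Choi--Jamio{\l}kowski isomorphism: a Hermiticity-preserving map $\cD\colon B\to C$ is a valid (CPTP) channel if and only if $D=J(\cD)\geq0$ and $\tr_C D=\Id_B/d_B$, where the latter is precisely the trace-preservation constraint in the normalized-Choi convention used here. These two conditions give the first two constraints of the claimed program.

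First I would compute the Choi operator of the target channel $\cD_p$. By linearity of the Choi map, the identity part $(1-p)\id(\cdot)$ contributes $(1-p)\Phi^+_{AC}$, while the replacer part $p\,\tfrac{\Id}{d}\tr(\cdot)$ contributes $\tfrac{p}{d_Ad_C}\Id$, using that applying a replacer channel to one half of the maximally entangled state yields $\tfrac{\Id}{d}\otimes\tfrac{\Id}{d}$. Hence $J(\cD_p)=(1-p)\Phi^+_{AC}+\tfrac{p}{d_Ad_C}\Id$, which is exactly the right-hand side of the third constraint.

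Next I would use the link-product identity $J(\cD\circ\cN)=J(\cD)\star J(\cN)=J(\cN)\star D$ (invoking commutativity of the link product from~\cite{chiribella2009theoretical}) to rewrite the equality of channels $\cD\circ\cN=\cD_p$. Since the Choi map is a bijection, $\cD\circ\cN=\cD_p$ is equivalent to $J(\cD\circ\cN)=J(\cD_p)$, i.e. the linear equality $J(\cN)\star D=(1-p)\Phi^+_{AC}+\tfrac{p}{d_Ad_C}\Id$. Assembling the positivity, trace, and equality constraints and isolating $p$ in the objective yields the stated SDP; since the feasible region is compact and nonempty ($p=1$ is always feasible, as every channel degrades to the fully depolarizing channel) the infimum is attained and is a minimum. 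The main obstacle is purely one of bookkeeping: keeping the normalization conventions consistent (normalized versus unnormalized Choi operators, and on which tensor factor the trace-preservation constraint lives) and verifying that the link-product composition formula applies with the stated system labels, including the partial-transpose convention in the definition of $\star$.
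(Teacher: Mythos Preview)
Your proposal is correct and follows essentially the same approach as the paper: unfold the definition, pass to Choi operators via the link-product identity $J(\cD\circ\cN)=J(\cD)\star J(\cN)$, and then encode the CPTP constraints on $\cD$ as the SDP constraints $D\geq 0$ and $\tr_C D=\Id_B/d_B$. The paper's proof is terser and leaves implicit the explicit computation of $J(\cD_p)$, the commutativity step needed to match the ordering $J(\cN)\star D$ in the statement, and the attainment of the infimum, all of which you spell out correctly.
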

\begin{proof}
We have, 
\begin{align}
    \widecheck\alpha(\cN)
    &= \inf\{p : \cN \deg\cD_p \} \\
    &= \inf\{p : \exists \cD \;\text{s.t.}\; J(\cD\circ\cN) = J(\cD_p) \} \\
    &= \inf\{p : \exists \cD \;\text{s.t.}\; J(\cD)\star J(\cN) = J(\cD_p) \}. 
\end{align}
Casting $J(\cD)\equiv D$ as a variable in the optimization with the usual constraints on Choi states gives the desired SDP. 
\end{proof}
Motivated by our earlier discussion, we also define the \textit{reverse transpose quantum Doeblin coefficient}, 
\begin{align}
    \widecheck\alpha^T(\cN):= \inf\{p : \cN \deg\cD^T_p \}.
\end{align}
Interestingly, in this case we are not restricted to PPT channels $\cN$. Instead, $p$ needs to take values such that $D^T_p$ is completely positive. In particular, this implies that $\widecheck\alpha^T(\cN)\geq\frac{d}{d+1}$. While this seems like a major restriction, we will soon see examples for which $\widecheck\alpha^T$ gives better bounds then $\widecheck\alpha$. Before moving on, we briefly mention that, of course, $\widecheck\alpha^T$ can be formulated as an SDP, similar to Corollary~\ref{Cor:walpha-SDP}, and that it improves the bound on the expansion coefficient to
\begin{align}
    \widecheck\eta_{\tr}(\cN) \geq 1 - \min\{\widecheck\alpha(\cN),\widecheck\alpha^T(\cN)\}. \label{Eq:Bound-c-tr-max}
\end{align}
Both statements can be proven similarly to their respective analog results discussed previously. 
Next, we will discuss some examples in the following section. 

\subsection{Examples}
First, we revisit the depolarizing channel. For this channel, the expansion coefficient is easily checked to be equal to its contraction coefficient, 
\begin{align}
    \widecheck\eta_{\tr}(\cD_p) = \eta_{\tr}(\cD_p) = |1-p|. 
\end{align}
For $p\in[0,1]$ this bound is obviously achieved by the reverse Doeblin coefficient as it is itself based on the depolarizing channel. It is still interesting to look at $p\geq1$, because here we again get a better (optimal) bound from the transpose variant $\widecheck\alpha^T(\cD_p)$. We plot both coefficients in Figure~\ref{fig:Dp-Tw}. We remark that similar observations again also hold for the transpose depolarizing channel. 

\begin{figure}
    \centering
    \begin{tikzpicture}
  \node (img)  {\includegraphics[width=0.95\linewidth]{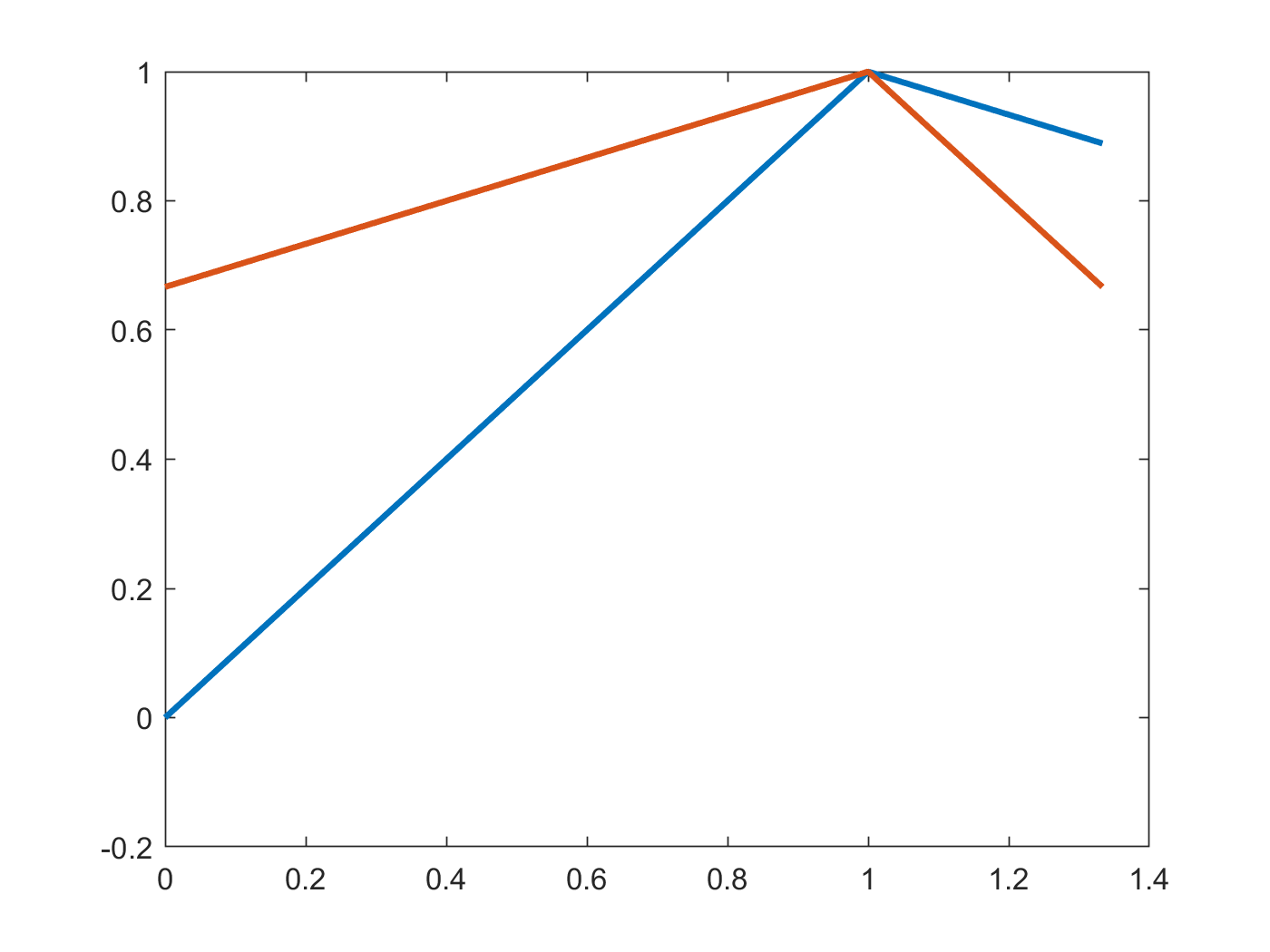}};
    \node[below=of img,  node distance=0cm, yshift=1.3cm] {$p$};
  %\node[left=of img, node distance=0cm, rotate=90, anchor=center,yshift=-0.7cm,font=\color{red}] {y-axis};
 \end{tikzpicture}
    \caption{Plot of the reverse Doeblin coefficients $\widecheck\alpha(\cD_p)$ (blue) and $\widecheck\alpha^T(\cD_p)$ (red) for $p\in[0,\frac43]$.}
    \label{fig:Dp-Tw}
\end{figure}

It is more interesting to see whether the reverse coefficient also gives a non-trivial, $\widecheck\alpha(\cN)<1$, bound on other channels. Again, we take the generalized amplitude damping channel and plot $\widecheck\alpha(\cA_{p,\eta})$ in Figure~\ref{fig:GAD-alpha-r}. One can clearly see that the reverse Doeblin coefficient places strong restrictions on the expansion coefficient. 

\begin{figure}
    \centering
    \begin{tikzpicture}
  \node (img)  {\includegraphics[width=0.95\linewidth]{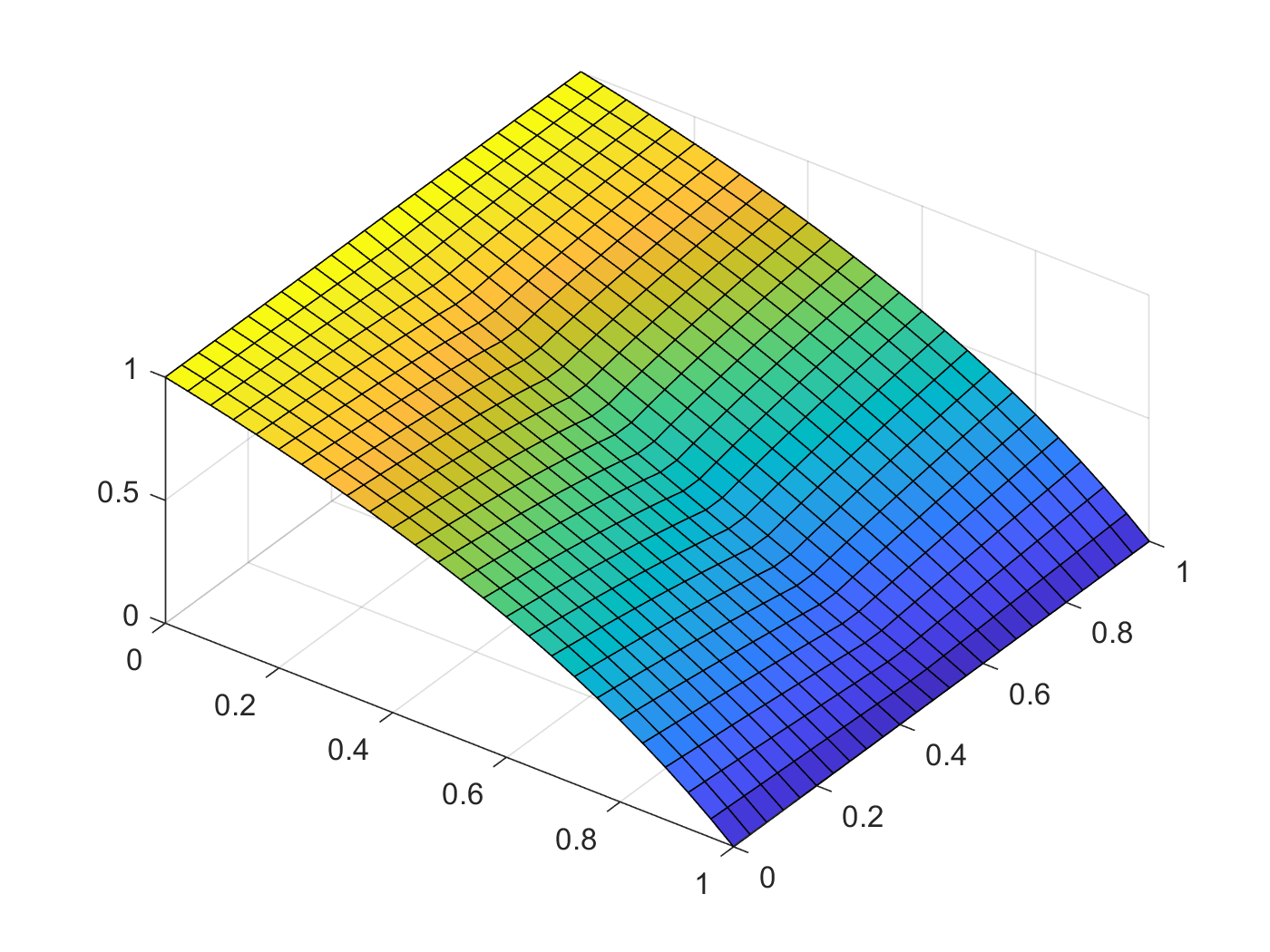}};
    \node[below=of img,  node distance=0cm, yshift=2cm,xshift=-2cm] {$\eta$};
  \node[below=of img,  node distance=0cm, yshift=2.2cm,xshift=2.4cm] {$p$};
 \end{tikzpicture}
    \caption{Plot of the reverse Doeblin coefficient $\widecheck\alpha(\cA_{p,\eta})$ for $p\in[0,1]$ and $\eta\in[0,1]$.}
    \label{fig:GAD-alpha-r}
\end{figure}

We can think of the gap between expansion and contraction coefficients as the $\textit{data processing range}$ of a channel $\cN$ with respect to a specified divergence. Doeblin coefficients then place lower and upper bounds on that range for the trace distance, 
\begin{align}
     1 - \widecheck\alpha(\cN) \leq \widecheck\eta_{\tr}(\cN)  \leq \eta_{\tr}(\cN) \leq 1 - \alpha(\cN), \label{Eq:Dp-range}
\end{align}
and analogously for the transpose variants. We visualize these bounds for the generalized amlpitude damping channel in Figure~\ref{fig:DP-Range}. 
Finally, consider the bit-flip channel, 
\begin{align}
    \cD^X_p(\cdot) = (1-p)\id(\cdot) + X(\cdot)X, 
\end{align}
where $X$ is the Pauli-X operator. It is well known that 
\begin{align}
    \eta_{\tr}(\cD^X_p)=1
\end{align}
for all $p\in[0,1]$. The expansion coefficient is more interesting. One can easily give an upper bound 
\begin{align}
    \widecheck\eta_{\tr}(\cD^X_p)\leq |1-2p|
\end{align}
by trying out the $|0\rangle$ and $|1\rangle$ states. The lower bound given by the reverse Doeblin coefficient leaves little room for the expansion coefficient. We show this in Figure~\ref{fig:X}.

\begin{figure}
    \centering
    \begin{tikzpicture}
  \node (img)  {\includegraphics[width=0.95\linewidth]{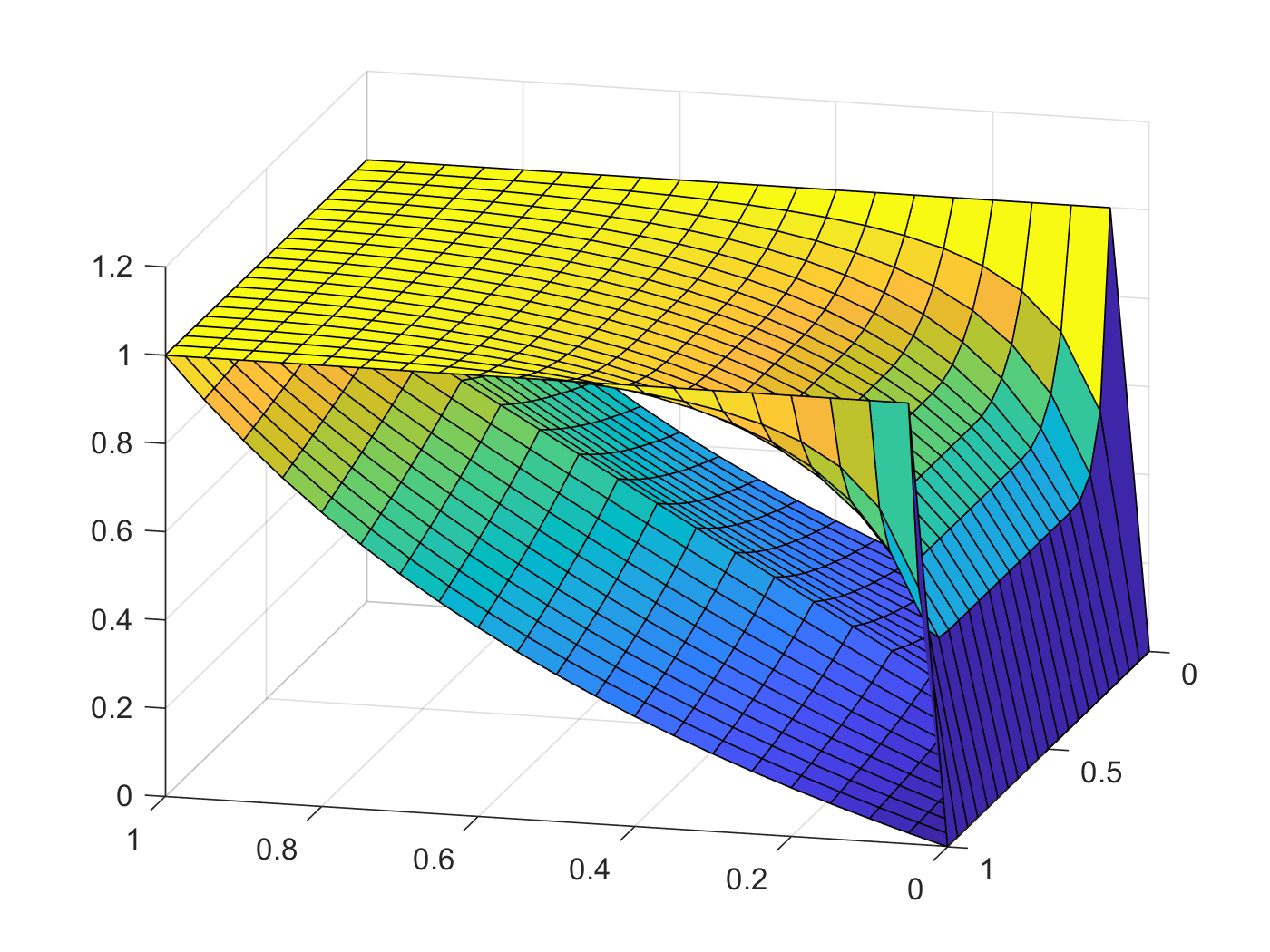}};
    \node[below=of img,  node distance=0cm, yshift=1.5cm,xshift=-1cm] {$\eta$};
  \node[below=of img,  node distance=0cm, yshift=2cm,xshift=3cm] {$p$};
 \end{tikzpicture}
    \caption{Plot of the bounds on the data processing range of $\cA_{p,\eta}$:  $1-\widecheck\alpha(\cA_{p,\eta})$ (lower bound) and $1-\alpha(\cA_{p,\eta})$ (upper bound) for $p\in[0,1]$ and $\eta\in[0,1]$.}
    \label{fig:DP-Range}
\end{figure}

\begin{figure}
    \centering
    \begin{tikzpicture}
  \node (img)  {\includegraphics[width=0.95\linewidth]{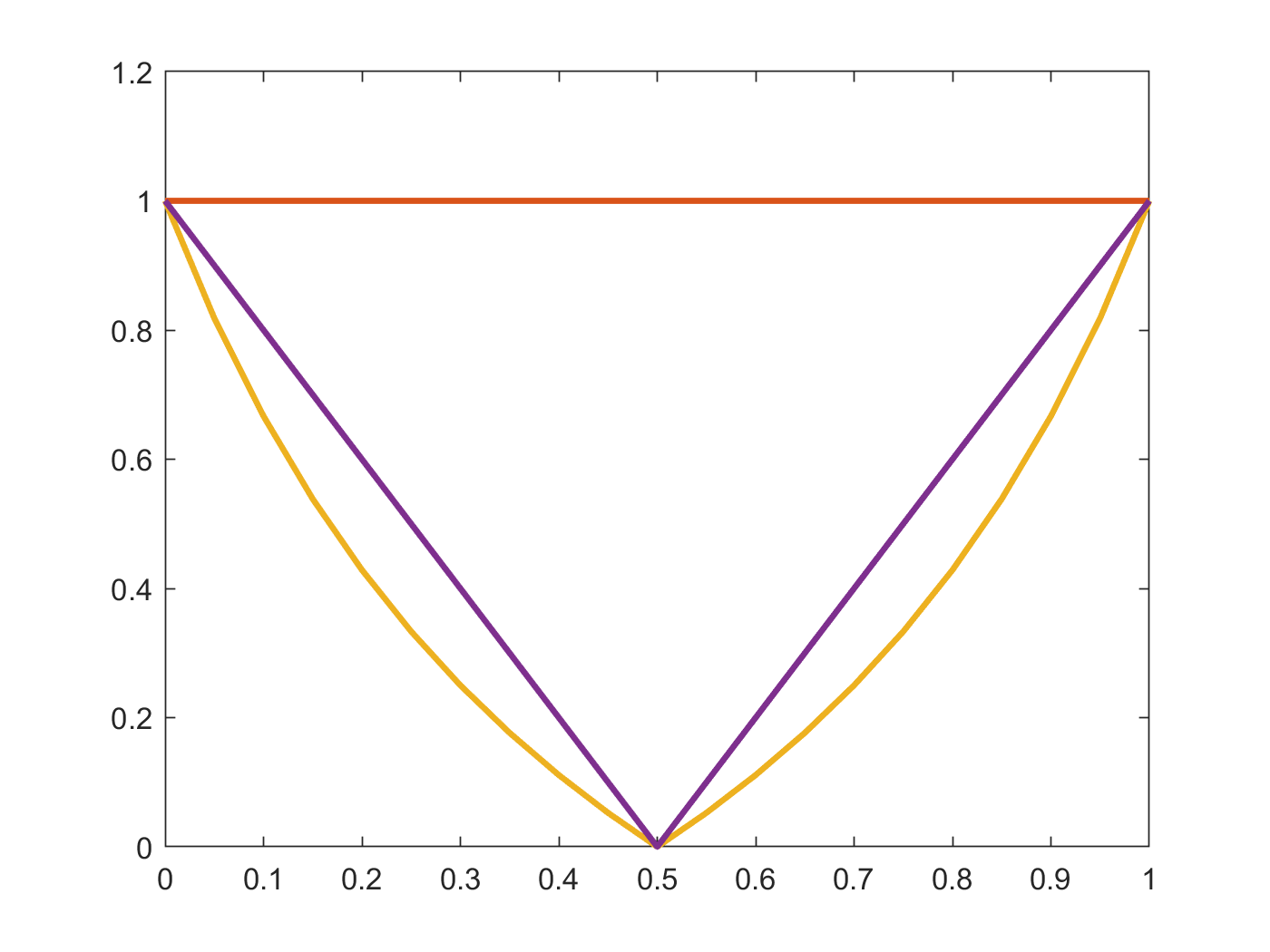}};
  \node[below=of img,  node distance=0cm, yshift=1.4cm] {$p$};
 \end{tikzpicture}
    \caption{Plot of the bounds on the data processing range of $\cD^X_p$:  $1-\widecheck\alpha(\cD^X_p)$ (yellow), $|1-2p|$ (blue) and $\eta_{\tr}(\cD^X_p)$ (red) for $p\in[0,1]$.}
    \label{fig:X}
\end{figure}

\subsection{Tighter bounds from generalized depolarizing channels}

In the above, we have defined the reverse Doeblin coefficient via the depolarizing channel. As mentioned before, this trivially gives an optimal contraction bound for the depolarizing channel itself. However, perhaps surprisingly, if we slightly change the channel that is no longer the case. Define, 
\begin{align}
    \cD_{p,\sigma}(\rho)=(1-p)\rho + p\sigma. 
\end{align}
Numerics suggests that $\alpha(\cD_{p,\sigma})$ does not reach the value $p$ necessary to optimally bound the trace distance contraction coefficient. This motivates us to give an improved bound where we allow for degradation into any channel of the form $\cD_{p,\sigma}$. In fact, we don't even need $\sigma$ to be a state and, similar to the hermitian relaxation above, allow for any $\cD_{p,X}$. Here, $X$ can be any hermitian operator with trace one. Define, 
\begin{align}
    \widecheck\alpha^H(\cN):= \inf\{p : \exists X\;\text{s.t.}\; \cN \deg\cD_{p,X} \}.
\end{align}
Alternatively, we can write this again as an SDP. 
\begin{corollary} 
The generalized reverse Doeblin coefficient can be expressed as the following SDP, 
\begin{align}
    \widecheck\alpha^H(\cN) = \min_{\substack{
        D\geq 0 \\
        \tilde X\;\text{hermitian} \\ \tr_C D = \frac{\Id}{d_B} \\ 
        J(\cN)\star D = (1-\tr\tilde X)\Phi^+_{AC} + \frac{\Id}{d_A}\otimes \tilde X }} \tr\tilde X,  
\end{align}
\end{corollary}
\begin{proof}
    By following the proof of Corollary~\ref{Cor:walpha-SDP}, we already get something very close to the claim. The final form follows by introducing $\tilde X=p X$ to make the constraint SDP compatible. 
\end{proof}

Following the ideas of the previous results, it is easy to see that 
\begin{align}
    \widecheck\eta_{\tr}(\cN) \geq 1- \widecheck\alpha^H(\cN). 
\end{align}
This is also clearly optimal for bounding $\widecheck\eta_{\tr}(\cD_{p,\sigma})$ for any $\sigma$ and hence gives a stronger bound. 
Possibly even more remarkable is that we also get a much stronger bound for the generalized amplitude damping channel. For illustrating purposes, we consider the standard amplitude damping channel. That is the GAD for $p=1$. 
In Figure~\ref{fig:AD-full}, the Doeblin coefficients are plotted each in their standard and tightened $H$ variants. Additionally we plot an inner bound for the data processing range of the trace distance.
This region is achieved by choosing example states to bound contraction and expansion coefficients. To be precise, the states $|+\rangle, |-\rangle$ give the upper border and the states $|1\rangle, |0\rangle$ give the lower border. 
We had already seen before, compare Figure~\ref{fig:alpha-H-comp}, that, while $\alpha(\cA_{1,\eta})$ is trivial, $\alpha^H(\cA_{1,\eta})$ gives a much more useful upper bound. 
Also the lower bound from $\widecheck\alpha(\cA_{1,\eta})$ gets much improved by switching to $\widecheck\alpha^H(\cA_{1,\eta})$. In fact, it appears that the resulting lower bound is optimal. Indeed, we confirm analytically in Appendix~\ref{App:GAD-exp}, that 
\begin{align}
        \widecheck\eta(\cA_{p,\eta}) = 1-\widecheck\alpha^H(\cA_{p,\eta}) = \eta. 
\end{align}

We make one final remark. In the definition of $\widecheck\alpha^H(\cA_{1,\eta})$, we chose to optimize over hermitian operators $X$, because it allows to optimize over a larger class than just positive operators. However, we note that for all examples presented, optimizing under the additional constraint that $X\geq0$ did not lead to any difference in the resulting values.

\begin{figure}
    \centering
    \begin{tikzpicture}
  \node (img)  {\includegraphics[width=0.975\linewidth]{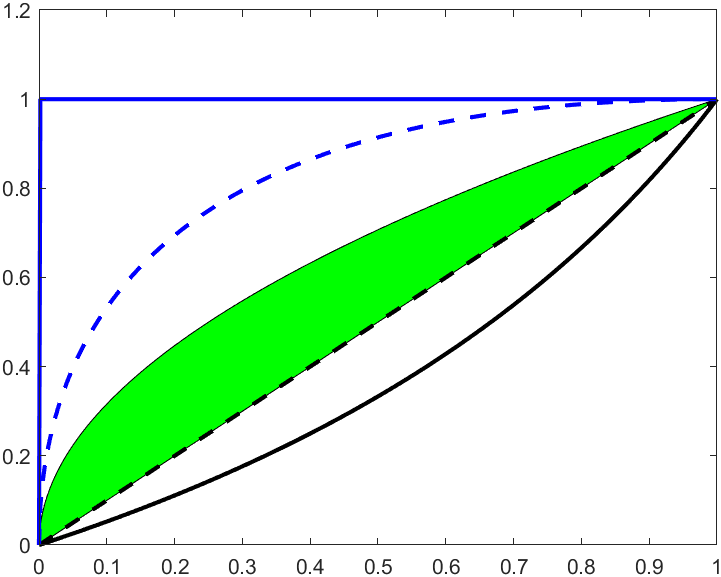}};
  \node[below=of img,  node distance=0cm, yshift=1cm] {$\eta$};
 \end{tikzpicture}
    \caption{Plot of the bounds on the data processing range of $\cA_{1,\eta}$:  $1-\alpha(\cA_{1,\eta})$ (blue, solid), $1-\alpha^H(\cA_{1,\eta})$ (blue, dashed), $1-\widecheck\alpha^H(\cA_{1,\eta})$ (black, dashed) and $1-\widecheck\alpha(\cA_{1,\eta})$ (black, solid) for $\eta\in[0,1]$. The green area gives the inner bound on the data processing range described in the main text. }
    \label{fig:AD-full}
\end{figure}

\subsection{Expansion for other divergences}
\label{Sec:Exp-f-div}
The Doeblin coefficient is by the previous discussion an upper bound on the contraction coefficient of every $f$-divergence. Naturally, we also wonder how different expansion coefficients are related to each other. We can easily find that the reverse Doeblin coefficient does not lower bound all expansion coefficients. For example, for the qubit depolarizing channel with $p\in[0,1]$~\cite{hiai2016contraction}, 
\begin{align}
    \eta_{D}(\cD_p) = (1-p)^2.
\end{align}
Hence, the reverse Doeblin coefficient doesn't even lower bound the larger contraction coefficient. We can in fact say a bit more about the expansion coefficients for $f$-divergences. First, we state the following lemma that is based on tools from~\cite{hirche2023quantum}. 
\begin{lemma}
    \label{Lem:x2-bound}
    For any four times differentiable $f\in\cF$ with $0<f''(1)<\infty$, we have
    \begin{align}
         \widecheck\eta_f(\cA,\sigma)&\leq \widecheck\eta_{x^2}(\cA,\sigma), \\
        \widecheck\eta_f(\cA)&\leq \widecheck\eta_{x^2}(\cA) ,
    \end{align}
    where $\eta_{x^2}$ is the contraction coefficient for the $\Hell_2$ divergence, also known as $\chi^2$ divergence, given by $f(x)=x^2-1$. 
\end{lemma}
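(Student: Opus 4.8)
The plan is to exploit the fact that, near a fixed reference state $\sigma$, every $f$-divergence in this family collapses to one and the same quadratic form --- the $\chi^2$ form --- with the only $f$-dependence being the scalar prefactor $f''(1)$. Because the expansion coefficient is an \emph{infimum}, each local perturbation direction furnishes an \emph{upper} bound on $\widecheck\eta_f(\cA,\sigma)$, and the best such bound is exactly $\widecheck\eta_{x^2}(\cA,\sigma)$. This is the mirror image of the familiar statement that for contraction coefficients (a supremum) the $\chi^2$ divergence is extremal in the opposite sense.

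Concretely, first I would fix $\sigma$ and a traceless Hermitian $\Delta$ supported on $\supp\sigma$, and set $\rho_t := \sigma + t\Delta$, which is a valid state with $\supp\rho_t\subseteq\supp\sigma$ for all sufficiently small $t>0$. Using the local second-order expansion of these $f$-divergences from~\cite{hirche2023quantum}, valid precisely under the stated hypotheses that $f$ be four times differentiable with $0<f''(1)<\infty$, one has
\begin{align}
    D_f(\rho_t\|\sigma) &= \tfrac{f''(1)}{2}\, t^2\, Q_\sigma(\Delta) + o(t^2), \\
    D_f(\cA(\rho_t)\|\cA(\sigma)) &= \tfrac{f''(1)}{2}\, t^2\, Q_{\cA(\sigma)}(\cA(\Delta)) + o(t^2),
\end{align}
where $Q_\sigma$ denotes the $f$-independent $\chi^2$ quadratic form at $\sigma$. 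The scalar $f''(1)/2$ cancels in the ratio, so $D_f(\cA(\rho_t)\|\cA(\sigma))/D_f(\rho_t\|\sigma)\to Q_{\cA(\sigma)}(\cA(\Delta))/Q_\sigma(\Delta)$ as $t\to 0$.

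Next, since $\widecheck\eta_f(\cA,\sigma)$ is an infimum over all admissible $\rho$, it is bounded above by the ratio at each $\rho_t$; letting $t\to 0$ gives $\widecheck\eta_f(\cA,\sigma)\le Q_{\cA(\sigma)}(\cA(\Delta))/Q_\sigma(\Delta)$ for every direction $\Delta$. Taking the infimum over $\Delta$ yields $\widecheck\eta_f(\cA,\sigma)\le \inf_\Delta Q_{\cA(\sigma)}(\cA(\Delta))/Q_\sigma(\Delta)$. It then remains to identify the right-hand side with $\widecheck\eta_{x^2}(\cA,\sigma)$: because $D_{x^2}(\rho\|\sigma)$ is \emph{exactly} a homogeneous quadratic form $Q_\sigma(\rho-\sigma)$ in the displacement $\rho-\sigma$, its ratio under $\cA$ is scale-invariant and depends only on the direction, so $\inf_\rho$ reduces to $\inf_\Delta$ and the identification is immediate. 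The unconditional statement then follows by taking $\inf_\sigma$ of the inequality $\widecheck\eta_f(\cA,\sigma)\le\widecheck\eta_{x^2}(\cA,\sigma)$, since $\widecheck\eta_f(\cA)=\inf_\sigma\widecheck\eta_f(\cA,\sigma)$ and likewise for $x^2$.

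The hard part will be the rigorous justification of the local expansion: one must verify that the leading quadratic term is genuinely the same form $Q_\sigma$ for all admissible $f$ (depending on $f$ only through $f''(1)$) and that the remainder is truly $o(t^2)$. This is exactly where the four-times-differentiability and $0<f''(1)<\infty$ hypotheses enter --- they allow one to differentiate under the integral sign in the representation of $D_f$ in terms of the hockey-stick divergences $E_\gamma$ and to control the error uniformly on a neighbourhood of $\sigma$. Two minor technical points also need care: that $Q_\sigma(\Delta)\neq 0$ for the directions in play, so the ratio is well defined (handled by restricting to $\Delta$ in the support with $Q_\sigma(\Delta)>0$), and the admissibility $\rho_t\ge 0$ with $\supp\rho_t\subseteq\supp\sigma$ (handled by the smallness of $t$).
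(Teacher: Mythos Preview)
Your proposal is correct and follows essentially the same approach as the paper: both use the local second-order expansion $\lim_{\lambda\to 0}\frac{2}{\lambda^2}D_f(\lambda\rho+(1-\lambda)\sigma\|\sigma)=f''(1)\chi^2(\rho\|\sigma)$ from~\cite{hirche2023quantum}, cancel $f''(1)$, and then take the infimum. The paper's version is marginally more streamlined because it parametrizes the perturbation by states $\rho$ (via $\lambda\rho+(1-\lambda)\sigma$) rather than abstract traceless directions $\Delta$, so the limiting ratio is directly $\chi^2(\cA(\rho)\|\cA(\sigma))/\chi^2(\rho\|\sigma)$ and your separate ``identification'' step becomes unnecessary.
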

\begin{proof}
    By definition we have,
    \begin{align}
        &D_f(\cA(\lambda\rho+(1-\lambda)\sigma)\|\cA(\sigma)) \nonumber\\
        &\quad\geq \widecheck\eta_f(\cA,\sigma) D_f(\lambda\rho+(1-\lambda)\sigma\|\sigma). \label{Eq:chi2-df-lb}
    \end{align}
    Now,~\cite[Theorem~2.8]{hirche2023quantum} essentially states that.  
    \begin{align}
        \lim_{\lambda\rightarrow 0} \frac{2}{\lambda^2} D_f(\lambda\rho+(1-\lambda)\sigma\|\sigma) = f''(1) \,\chi^2(\rho\|\sigma). 
    \end{align}
    Multiplying Equation~\eqref{Eq:chi2-df-lb} by $\frac2{\lambda^2}$ and taking the limit $\lambda\rightarrow 0$ gives then the first claim up to a factor $f''(1)$. Since we need to remove that factor, we get the condition $0<f''(1)<\infty$. The second claim follows by taking the infimum over all $\sigma$. 
\end{proof}
Returning once more to the qubit depolarizing channel, we can show that 
\begin{align}
    \widecheck\eta_{x^2}(\cD_p)=0. 
\end{align}
We give the details in Appendix~\ref{App:Ex-details}. 
That implies that all expansion coefficients that fulfill the condition of Lemma~\ref{Lem:x2-bound} are zero as well. This is, of course, not in contradiction to our previous results concerning the trace distance. Note that the trace distance is an $f$-divergence for $(t-1)_+$, which is not differentiable in $1$ where it hence also has a diverging second derivative. 
Given that the trace distance is a special case of the Hockey-Stick divergence, it seems natural to consider the latter next. The corresponding contraction coefficients have recently found an application in quantum differential privacy~\cite{hirche2022quantum}. However, it turns out that also these expansion coefficients are always zero for $\gamma>1$. Details can again be found in Appendix~\ref{App:Ex-details}. In summary, the trace distance expansion coefficient is so far the only one we found to be non-trivial for the qubit depolarizing channel. We also refer to~\cite{laracuente2023information} for related work. 

We leave it for future work to further explore expansion coefficients for general $f$-divergences and other channels. In particular, determining when they are non-zero seems like an interesting problem.

\section{Applications and Extensions}\label{Sec:Applications}

\subsection{Resource Theories}

Let us recall one of the expressions for $\alpha(\cN)$ and slightly rewrite it, 
\begin{align}
        &\alpha(\cN) \\
        &= \max\{c\in[0,1] : \exists\sigma \;\text{s.t.}\; c\,\sigma \leq \cN  \} \\
        &= \max\{c\in[0,1] : \exists\cR_\sigma \;\text{s.t.}\; c\,\cR_\sigma \leq \cN,\, \nonumber\\
        &\qquad\;\qquad\cR_\sigma\,\text{is a replacer channel}  \}. \label{Eq:Q-Doeblin-min-2}
\end{align}
Intuitively, we are hence comparing a given channel $\cN$ to the set of channels that delete all knowledge of the input state. Given that we are aiming to apply this to contraction coefficients, this is a natural choice as we always have $\eta_f(\cR_\sigma)=0$ since $D_f(\cR_\sigma(\rho)\|\cR_\sigma(\tau))=0$. 
While a simple observation, it gives us a path to think of applications with respect to more limited resources. The following quantity appeared in~\cite{fawzi2022lower} in the context of bounding the space overhead of fault-tolerant quantum computation,
\begin{align}
        p_1(\cN) &= \max\{c\in[0,1] : \exists\cA \;\text{s.t.}\; c\,\cA \leq \cN,\, \nonumber\\
        &\qquad\;\qquad\cA\,\text{is entanglement breaking}  \}. \label{Eq:Q-Doeblin-EB}
\end{align}
Here, our channel is compared to a set of channels that all completely destroy entanglement. Clearly, replacer channels are a special case of entanglement breaking channels and hence,
\begin{align}
    p_1(\cN) \geq \alpha(\cN). 
\end{align}
In general, if we have a given resource theory with a set of free states $\cF$, then we can define resource breaking channels, denoted $\cB_\cF$, as all those channels $\cN$ for which 
\begin{align}
    \cN(\rho) \in\cF \quad\forall\rho.
\end{align}
Naturally, a resource theory also comes with free operations, denoted $\cO_\cF$, which are all channels $\cM$ for which
\begin{align}
    \cM(\rho) \in\cF \quad\forall\rho\in\cF.
\end{align}
Denote by $\cN \leq_\cF \cM$ the partial order equivalent to $\cM-\cN\in\cO_\cF$. 
Define for a free operation $\cN$, 
\begin{align}
        &\alpha^\cF(\cN) \\
        &= \max\{c\in[0,1] : \exists\cR\in\cB_\cF \;\text{s.t.}\; c\,\cR \leq_\cF \cN \}. \label{Eq:Q-Doeblin-min-R} \\
        &=\max\{\epsilon : \exists\cD\in\cO_\cF,\cR\in\cB_\cF \nonumber\\
        &\qquad\quad\qquad\text{s.t.}\; \cN = (1-\epsilon)\cD + \epsilon \cR  \},
\end{align}
where the second equality follows along the lines of Equations~\eqref{Eq:order-1}-\eqref{Eq:order-4}. 
Similar to the previous considerations we can use this quantity to bound data processing. Let $R$ be a convex resource measure that obeys data processing. Meaning, for all $\cN\in\cO_\cF$, we have,
\begin{align}
    R(\cN(\rho)) \leq R(\rho). 
\end{align}
For such measures it makes sense to define a contraction coefficient as 
\begin{align}
    \eta_R(\cN) = \sup_\rho \frac{R(\cN(\rho))}{R(\rho)}. 
\end{align}
We have the following observation.
\begin{corollary}\label{Cor:eta-R-bound}
    Let $R$ be a convex resource measure and $\cN\in\cO_\cF$. Then,
    \begin{align}
        \eta_R(\cN) \leq 1 - \alpha^\cF(\cN). 
    \end{align}
\end{corollary}
\begin{proof}
    Let $\epsilon$ be the optimizer in $\alpha^\cF(\cN)$, then
    \begin{align}
        R(\cN(\rho)) &= R((1-\epsilon)\cD(\rho) + \epsilon \cR(\rho)) \\
        &\leq (1-\epsilon)R(\cD(\rho)) + \epsilon R(\cR(\rho)) \\
        &\leq (1-\epsilon)R(\rho) \\
        &= (1-\alpha^\cF(\cN))R(\rho) , 
    \end{align}
    where the first and last equalities are by definition, the first inequality by convexity and the second by data processing. Dividing both sides by $R(\rho)$ leads to the claim. 
\end{proof}
Also this quantity can be expressed as a particular $\max$-relative entropy, 
\begin{align}
    \alpha^\cF(\cN) = \exp{\left(- \inf_{\cR\in\cB_\cF} D^\cF_{\max}(\cR\|\cN)\right)},
\end{align}
where 
\begin{align}
D^\cF_{\max}(\cR\|\cN) = \log \min\{c : \cR \leq_\cF c\,\cN \}. 
 \end{align}
This restricted $\max$-relative entropy extends the definition of the cone restricted $\max$-relative entropy in~\cite{george2022cone} to quantum channels.

\subsection{Beyond degradability and erasure channels}

As we have seen in the previous sections, comparing to the erasure channel with respect to either the less noisy or degradable partial order has interesting implications. We also swapped that concept around by introducing reverse Doeblin coefficients that compare to the depolarizing channel. Naturally, one wonders what other combinations of partial orders and channels could lead to interesting quantities.  

Probably the most prominent remaining example of a channel partial order is the more capable partial order. We say that $\cN$ is more capable than $\cM$, denoted $\cN\mc\cM$, if 
\begin{align}
    I(X:B_\cN) \geq I(X:B_\cM)
\end{align}
for all
\begin{align}
    \rho_{XA} = \sum_x p(x) |x\rangle\langle x| \otimes \Psi_x.  \label{Eq:cq-pure}
\end{align}
Note that the only difference with the less noisy order is that now the classical quantum states are restricted to ensembles of pure states. 
We define the coefficient 
\begin{align}
    \gamma(\cN):= \sup\{\epsilon : \cE_\epsilon \mc \cn \}.
\end{align}
Naturally, 
\begin{align}
    \gamma(\cN) \geq \beta(\cN) = 1 - \eta_f(\cN) \geq \alpha(\cN),
\end{align}
for operator convex $f$. 
This quantity also has a formulation as contraction coefficient,
\begin{align}
    1 - \gamma(\cN) = \sup_{\rho_{XA}} \frac{I(X:B)}{I(X:A)},
\end{align}
where the supremum is over all states of the form given by Equation~\eqref{Eq:cq-pure}. 
Whether this can also be expressed in terms of relative entropies, similar to Equation~\eqref{Eq:etaf-MI-OC}, remains an interesting open problem. 

Of course, we can also define reverse coefficients based on the above. This, however, brings more questions than answers. In particular concerning the applications of these coefficients. 

To shed light on this question we take a step back and consider the classical setting. In particular, let $\cC_C$ be the set of all binary-input symmetric-output (BISO) channels with capacity $C$ and let $\cB_p$ be the binary symmetric channel (BSC). We propose the following reverse coefficients, 
\begin{align}
    \widecheck\alpha(\cN) &:= \inf\{h(p) : \cN \deg \cB_p \}, \\
    \widecheck\beta(\cN) &:= \inf\{h(p) : \cN \sln \cB_p \}, \\
    \widecheck\gamma(\cN) &:= \inf\{h(p) : \cN \mc \cB_p \}, 
\end{align}
where $h(p)$ is the binary entropy, which we introduce for normalization. 
Again, we immediately have, 
\begin{align}
    \widecheck\alpha(\cN) \geq \widecheck\beta(\cN) \geq \widecheck\gamma(\cN). 
\end{align}
However, it might not seem immediately clear how to compare e.g. $\gamma(\cN)$ and $\widecheck\gamma(\cN)$. To that end we give the following result. 
\begin{proposition}\label{Prop:gamma}
    For $\cN\in\cC_C$, we have, 
    \begin{align}
        \widecheck\gamma(\cN) = \gamma(\cN) = 1-C. 
    \end{align}
\end{proposition}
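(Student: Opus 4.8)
The plan is to translate both $\gamma(\cN)$ and $\widecheck\gamma(\cN)$ into mutual-information comparisons — which is how the more capable order $\mc$ is defined — and then reduce everything to a single scalar statement about the binary entropy, namely Mrs.\ Gerber's Lemma. Throughout, write $q\star p := q(1-p)+(1-q)p$ for the binary convolution, and recall two standard facts about a BISO channel $\cN$: its capacity $C$ is attained at the uniform input (where $H(X)=1$), and $\cN$ decomposes as a mixture of binary symmetric channels, i.e.\ there is an auxiliary variable $Z$, independent of the input $X$ and revealed to the receiver, such that conditioned on $Z=z$ the channel acts as $\cB_{p_z}$. This gives $C=\mathbb{E}_Z[1-h(p_Z)]$ and, for a $\mathrm{Bernoulli}(q)$ input, $I(X:B)_\cN=\mathbb{E}_Z[h(q\star p_Z)-h(p_Z)]$.

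First I would handle $\gamma(\cN)$. Since the binary erasure channel satisfies $I(X:B)_{\cE_\epsilon}=(1-\epsilon)H(X)$ for every input, the condition $\cE_\epsilon\mc\cN$ is exactly $(1-\epsilon)H(X)\ge I(X:B)_\cN$ for all binary $P_X$, whence
\begin{align}
\gamma(\cN)=1-\sup_{P_X}\frac{I(X:B)_\cN}{H(X)},
\end{align}
consistent with the contraction-coefficient form $1-\gamma(\cN)=\sup I(X:B)/I(X:A)$ since classically $I(X:A)=H(X)$. The uniform input gives $\sup\ge C$, so the work is the reverse inequality $I(X:B)_\cN\le C\,h(q)$. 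Using the decomposition it suffices to prove, for each constituent BSC,
\begin{align}
h(q\star p)-h(p)\le h(q)\bigl(1-h(p)\bigr),
\end{align}
and then average against $P_Z$. Writing $G(t):=h\bigl(q\star h^{-1}(t)\bigr)$, this is precisely the statement that $G$ lies below the chord of $[0,1]$ joining $G(0)=h(q)$ and $G(1)=1$, which holds because $G$ is convex by Mrs.\ Gerber's Lemma. This yields $\gamma(\cN)=1-C$.

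Next I would treat $\widecheck\gamma(\cN)=\inf\{h(p):\cN\mc\cB_p\}$. Evaluating $\cN\mc\cB_p$ at the uniform input gives $C\ge 1-h(p)$, i.e.\ $h(p)\ge 1-C$, so $\widecheck\gamma(\cN)\ge 1-C$. For the matching upper bound I would choose $p^\star$ with $h(p^\star)=1-C$ (so $C_{\cB_{p^\star}}=C$) and show $\cN\mc\cB_{p^\star}$. By the decomposition this is $\mathbb{E}_Z[h(q\star p_Z)-h(p_Z)]\ge h(q\star p^\star)-h(p^\star)$ for all $q$; since $\mathbb{E}_Z[h(p_Z)]=1-C=h(p^\star)$ the $-h(p)$ terms cancel and the claim collapses to
\begin{align}
\mathbb{E}_Z\bigl[G(h(p_Z))\bigr]\ge G\bigl(\mathbb{E}_Z[h(p_Z)]\bigr),
\end{align}
which is Jensen's inequality for the same convex $G$. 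Hence $\widecheck\gamma(\cN)=1-C$, completing the proposition.

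The main obstacle is the convexity of $G(t)=h(q\star h^{-1}(t))$: this is exactly Mrs.\ Gerber's Lemma, and it is the one non-elementary ingredient, since the naive three-point ($q=0,\tfrac12,1$) argument fails because the difference of concave functions has no fixed sign. Everything else is bookkeeping: the BISO-to-BSC decomposition, the MI formula for the erasure channel, and the reduction of both the chord bound and the mixture comparison to convexity of one scalar function. The pleasant feature of the plan is that both halves of the proposition are driven by the same lemma — the $\gamma$ half uses that a convex function lies below its chord, the $\widecheck\gamma$ half uses Jensen — so once convexity of $G$ is in hand the two equalities follow in parallel.
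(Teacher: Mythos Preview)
Your argument is correct and is a genuinely different route from the paper's. The paper establishes the easy inequalities $\gamma(\cN)\le 1-C$ and $\widecheck\gamma(\cN)\ge 1-C$ exactly as you do (via the capacity), but for the reverse inequalities it simply invokes the extremality result of Geng and Nair, namely $\cE_{1-C}\mc\cN\mc\cB_{h^{-1}(1-C)}$ for every BISO channel $\cN$ of capacity $C$, as a black box. You instead open that box: using the standard BSC-mixture decomposition of a BISO channel, you reduce both hard directions to the convexity of $t\mapsto h(q\star h^{-1}(t))$, i.e.\ Mrs.\ Gerber's Lemma, with the $\gamma$ side handled by the chord inequality and the $\widecheck\gamma$ side by Jensen. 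This is precisely the mechanism underlying the Geng--Nair extremality result, so your proof is effectively a self-contained derivation of the special case that is needed here. What the paper's citation buys is brevity and a pointer to the broader extremality picture; what your approach buys is transparency, since it isolates the single scalar convexity statement that drives both equalities and makes explicit that no further machinery is required.
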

\begin{proof}
    One direction is easy. The capacity of a classical channel is given by
    \begin{align}
        C(\cN) = \max_{p(x)} I(X:Y), 
    \end{align}
    with the well known special cases, 
    \begin{align}
        C(\cE_\epsilon) &= 1-\epsilon, \\
        C(\cB_p) &= 1 - h(p). 
    \end{align}
    Now, by definition of $\gamma(\cN)$, we have 
    \begin{align}
        C = \max_{p(x)} I(X:Y) \leq C(\cE_\epsilon) = 1- \gamma(\cN),
    \end{align}
    and hence
    \begin{align}
        \gamma(\cN) \leq 1-C. 
    \end{align}
    Similarly, we get 
    \begin{align}
        1-C \leq \widecheck\gamma(\cN). 
    \end{align}
    For the other direction, we employ results from~\cite{geng2013broadcast}. In particular, combining~\cite[Corollary 1\&2]{geng2013broadcast}, we have for $\cN\in\cC_C$,
    \begin{align}
        \cE_{1-C} \mc \cN \mc \cB_{h^{-1}(1-C)}. \label{Eq:Geng}
    \end{align}
    This implies immediately, 
    \begin{align}
        \gamma(\cN)\geq 1-C \geq \widecheck\gamma(\cN), 
    \end{align} 
    because by Equation~\eqref{Eq:Geng} $1-C$ is always a feasible solution in either optimization. This concludes the proof. 
\end{proof}
Combined with previous observations, this gives for any BISO channel $\cN$, 
\begin{align}
    &\widecheck\alpha(\cN) \geq \widecheck\beta(\cN) \geq \widecheck\gamma(\cN) \nonumber\\
    &= 1-C  \\ \nonumber
    &= \gamma(\cN) \geq \beta(\cN) = 1 - \eta_f(\cN) \geq \alpha(\cN). 
\end{align}
Naturally, an interesting question is how to extend the above observations to quantum channels. In the classical case the choice of the binary symmetric channel is natural in the light of the results in~\cite{geng2013broadcast}, which are themselves connected to the question of information combining bounds. Previous work on quantum information combining bounds suggests that this setting is more involved~\cite{hirche2018bounds} even in the simple setting of symmetric binary classical-quantum channels. At least, the role of the binary symmetric channel did inform the choice of the depolarizing channel in the fully quantum setting which appears to be a natural substitution.

\subsection{Bounds on information theoretic quantities}

Partial orders between channels have long been a useful tool to bound different quantities, most importantly capacities~\cite{watanabe2012private,sutter2017approximate,hirche2022bounding}. Also the quantum Doeblin coefficient can prove useful in this task. That is mostly because many quantities, while potentially hard to compute in general, often simplify a lot when considering the erasure channel. For example, the quantum capacity of the qubit erasure channel is known~\cite{bennett1997capacities}, 
\begin{align}
    Q(\cE_\epsilon) = \max\{0,1-2\epsilon\}. 
\end{align}
Implying, by a simple use of data processing, that for a general qubit channel $\cN$, we have
\begin{align}
    Q(\cN) \leq \max\{0,1-2\alpha(\cN)\}.
\end{align}
Similarly, the classical capacity and the two-way quantum capacity can be bounded by,
\begin{align}
    Q_2(\cN) &\leq 1-\alpha(\cN), \\
     C(\cN) &\leq 1-\alpha(\cN).
\end{align}
A similar observation holds for every quantity that obeys data processing and is easy to compute for the erasure channel.

\section{Conclusions}\label{Sec:Conclusions}

In this work we introduced quantum Doeblin coefficients as an optimization over the erasure parameter that is needed to degrade the associated  erasure channel into a given target channel. Similar to the classical case, these turned out to be equivalent to the optimal constant in the Doeblin criteria, which had received a quantum generalization in~\cite{wolf2012quantum}. As a consequence they are efficiently computable SDP upper bounds on the contraction coefficients of all $f$-divergences as defined in~\cite{hirche2023quantum}. In particular, they also upper bound the trace distance contraction coefficient, which itself bounds e.g. the relative entropy coefficient. Hence, the bound seems to cover most relevant divergences. 

Additionally, we found improved bounds. One,  making use of the transpose-degradable partial order. Another one, by using a relaxation from optimizing over positive to hermitian operators. Depending on the quantum channel, these can give significantly improved numerical bounds. Additionally, we introduced reverse Doeblin coefficients that efficiently lower bound the expansion coefficient for the trace distance. 

On the other hand, we are also leaving several open problems that should be interesting for future research. We briefly mention two of them here: First, while efficiently computable bounds on contraction coefficients are very useful, the most important property is often to show whether the coefficient is strictly smaller than 1. For now, it is not clear when our bound does reliably detect these cases. The standard quantum Doeblin coefficient does not have that property. However, our improved bound in Equation~\eqref{Eq:Bound-max-tr-H} might have it. We do currently not know of any example where this is not the case. In that vein, there might also be room for further improvements along the lines of the transpose degradability variant. 
Second, it would be interesting to further investigate the data-processing range of a given divergence, i.e. the interval between the values of the expansion and contraction coefficients. In particular, it might be useful to understand better when expansion coefficients are trivial. For now, we only know of non-trivial examples for the trace distance. In that case, at least, they are easy to find thanks to the efficiently computable reverse Doeblin coefficients.

\section*{Acknowledgments}
I would like to thank Ruben Ibarrondo for many helpful comments, in particular for suggesting to look at the transpose-depolarizing channel. I thank Mark M. Wilde for very helpful feedback on an earlier version of this manuscript. That lead in particular to the improved bounds in Section~\ref{Sec:hermitian}. I also thank Peixue Wu for interesting discussions and pointing me to~\cite{laracuente2023information}.

\bibliographystyle{ultimate}
\bibliography{lib}

\newpage
\onecolumn
\appendices
\section{Proof of Lemma~\ref{Lemma:Properties}}
\label{App:proofs}

\begin{proof}
    We start with proving (1). To show concavity let $\epsilon_\cN$ and $\epsilon_\cM$ be the optimizers of $\alpha(\cN)$ and $\alpha(\cM)$, respectively, and $\cD_\cN$ and $\cD_\cM$ the corresponding degrading maps. We now show that there exists a degrading map $\cD$ such that 
    \begin{align}
        \lambda\cN + (1-\lambda)\cM = \cD \circ \cE_\epsilon, 
    \end{align}
    with $\epsilon=\lambda \alpha(\cN) + (1-\lambda) \alpha(\cM)$. Observe, 
    \begin{align}
        &\lambda\cN + (1-\lambda)\cM \\
        &= \lambda \left[ (1-\epsilon_\cN)\cD_\cN + \epsilon_\cN \cD_\cN(e)\right] + (1-\lambda) \left[ (1-\epsilon_\cM)\cD_\cM + \epsilon_\cM \cD_\cM(e)\right] \\
        &= \left[\lambda (1-\epsilon_\cN)\cD_\cN + (1-\lambda) (1-\epsilon_\cM)\cD_\cM\right] + \left[\lambda\epsilon_\cN \cD_\cN(e) +  (1-\lambda)\epsilon_\cM \cD_\cM(e)\right]. 
    \end{align}
    Now, define the map $\cD$ via
    \begin{align}
        \cD(e) =& \frac{\lambda\epsilon_\cN}{\lambda\epsilon_\cN+(1-\lambda)\epsilon_\cM}\cD_\cN(e) \nonumber\\ 
        &+ \frac{(1-\lambda)\epsilon_\cM}{\lambda\epsilon_\cN+(1-\lambda)\epsilon_\cM}\cD_\cM(e), 
    \end{align}
    and for $\rho\perp e$, 
    \begin{align}
        \cD(\rho) =& \frac{\lambda (1-\epsilon_\cN)}{\lambda (1-\epsilon_\cN)+(1-\lambda) (1-\epsilon_\cM)}\cD_\cN(\rho) \nonumber\\
        &+ \frac{(1-\lambda) (1-\epsilon_\cM)}{\lambda (1-\epsilon_\cN)+(1-\lambda) (1-\epsilon_\cM)}\cD_\cM(\rho). 
    \end{align}
    This achieves our goal and hence proves concavity. \\
    Next we proof (2). Following the same notation as before, observe
    \begin{align}
        &\cN\otimes\cM \\
        &= \left[ (1-\epsilon_\cN)\cD_\cN + \epsilon_\cN \cD_\cM(e)\right] \otimes \left[ (1-\epsilon_\cM)\cD_\cM + \epsilon_\cM \cD_\cM(e)\right] \\
        &= (1-\epsilon_\cN)(1-\epsilon_\cM)\cD_\cN\otimes\cD_\cM + (1-\epsilon_\cN)\epsilon_\cM\cD_\cN\otimes \cD_\cM(e) \nonumber\\
        &\qquad + \epsilon_\cN (1-\epsilon_\cM)\cD_\cM(e)\otimes\cD_\cM + \epsilon_\cN\epsilon_\cM \cD_\cM(e)\otimes \cD_\cM(e) \\
        &= \cD \circ \cE_{\epsilon_\cN\epsilon_\cM}
    \end{align}
    where $\cD$ can be constructed from the above derivation. This implies $\alpha(\cN\otimes\cM)\geq \epsilon_\cN\epsilon_\cM$ and hence the claim. \\
    Finally, we prove property (3). Here, we have
    \begin{align}
        &\cN\circ\cM \\
        &= \left[ (1-\epsilon_\cN)\cD_\cN + \epsilon_\cN \cD_\cM(e)\right] \circ \left[ (1-\epsilon_\cM)\cD_\cM + \epsilon_\cM \cD_\cM(e)\right] \\
        &= (1-\epsilon_\cN)(1-\epsilon_\cM)\cD_\cN\circ\cD_\cM +  (1-\epsilon_\cN)\epsilon_\cM\cD_\cN \circ  \cD_\cM(e) + \epsilon_\cN \cD_\cM(e)  \\
        &= \cD \circ \cE_{1-(1-\epsilon_\cN)(1-\epsilon_\cM)}
    \end{align}
    where we used in the second equality that $\cD_\cN(e)$ acts as a replacer channel and for the final equality constructed a $\cD$ which clearly exists because only the first term depends on the input state. This concludes the proof. 
\end{proof}

\section{Examples of expansion coefficients}\label{App:Ex-details}

In this section, we want to expand on the examples given in Section~\ref{Sec:Exp-f-div} and show how we evaluated them. We are interested here in showing that certain expansion coefficients are zero, hence a corresponding upper bound is sufficient. 
\subsection{$\chi^2$-divergence and relative entropy}
To keep things simple we choose classical input states, 
\begin{align}
    \rho=|0\rangle\langle 0|, \qquad 
    \sigma=(1-\epsilon)|0\rangle\langle 0| + \epsilon |1\rangle\langle 1|. \label{Eq:rs-examples}
\end{align}
For these we can easily calculate, 
\begin{align}
    D_f(\rho\|\sigma) &= (1-\epsilon) f\left(\frac{1}{1-\epsilon}\right) +\epsilon f\left(\frac0\epsilon\right), \\
    D_f(\cD_p(\rho)\|\cD_p(\sigma)) &= ((1-p)(1-\epsilon)+\frac{p}2) f\left(\frac{1-\frac{p}{2}}{(1-p)(1-\epsilon)+\frac{p}2}\right) + ((1-p)\epsilon+\frac{p}2) f\left(\frac{\frac{p}{2}}{(1-p)\epsilon+\frac{p}2}\right). 
\end{align}
Dividing the second by the first line then gives an upper bound on the expansion coefficient. Here we make the example $f(x)=x^2-1$ explicit, 
\begin{align}
    D_{x^2}(\rho\|\sigma) &= (1-\epsilon)^{-1} -1, \\
    D_{x^2}(\cD_p(\rho)\|\cD_p(\sigma)) &= (1-\frac{p}{2})^2((1-p)(1-\epsilon)+\frac{p}2)^{-1} + (\frac{p}{2})^2 ((1-p)\epsilon+\frac{p}2)^{-1} -1. 
\end{align}
For $\epsilon\rightarrow 0$, both of the above are $0$, hence we consider their derivatives, leading after some calculation to
\begin{align}
     \frac{\partial}{\partial \epsilon} D_{x^2}(\rho\|\sigma) \Big|_{\epsilon = 0} &= 1, \\
     \frac{\partial}{\partial \epsilon} D_{x^2}(\cD_p(\rho)\|\cD_p(\sigma)) \Big|_{\epsilon = 0} &= 0.
\end{align}
With this, by the rule of L'Hôpital, we find
\begin{align}
    \widecheck\eta_{x^2}(\cD_p) \leq 0  \qquad\text{for}\quad \epsilon\rightarrow0. 
\end{align}
A similar derivation works for $f(x)=x\log(x)$ as well, implying that the same states achieve the minimum in the limit. We remark that the relative entropy case, along with the mutual information and other examples of quantum channels, was also investigated in~\cite{laracuente2023information}. 

\subsection{Hockey-Stick divergence}

Given how similar the Hockey-stick divergence is conceptually to the trace distance, one might expect it to have a similar behaviour when it comes to expansion coefficients. Nevertheless, also here, for the depolarizing channel the coefficient is zero. The reasoning, however, is different to the previous examples. 

An important property of many divergences is faithfulness. However, the Hockey-Stick divergence for $\gamma>1$ does not have this property. That means, for $\gamma>1$, 
\begin{align}
    E_{\gamma}(\rho\|\sigma) \notimplies \rho=\sigma. 
\end{align}
We use again the example states $\rho$ and $\sigma$ from Equation~\eqref{Eq:rs-examples}. It can be checked that
\begin{align}
    E_{\gamma}(\rho\|\sigma)=(1-\gamma(1-\epsilon))_+,
\end{align}
and hence 
\begin{align}
    E_{\gamma}(\rho\|\sigma)=0 \iff \epsilon\leq\frac{\gamma-1}{\gamma}. 
\end{align}
Similarly, 
\begin{align}
    E_{\gamma}(\cD_p(\rho)\|\cD_p(\sigma))=\left(1-\frac{p}2-\gamma\left[(1-p)(1-\epsilon)-\frac{p}2\right]\right)_+,
\end{align}
and hence 
\begin{align}
    E_{\gamma}(\cD_p(\rho)\|\cD_p(\sigma))=0 \iff \epsilon\leq\frac{\gamma-1}{\gamma}\frac{1-\frac{p}2}{1-p}. 
\end{align}
This allows us to choose for any $p>0$ an $\epsilon^\star$ such that
\begin{align}
    \frac{\gamma-1}{\gamma} < \epsilon^\star < \frac{\gamma-1}{\gamma}\frac{1-\frac{p}2}{1-p}, 
\end{align}
for which 
\begin{align}
    E_{\gamma}(\cD_p(\rho)\|\cD_p(\sigma))=0, \quad\text{but}\quad E_{\gamma}(\rho\|\sigma)>0. 
\end{align}
The states $\rho$ and $\sigma$ with the parameter $\epsilon^\star$ are then an example of states that achieve
\begin{align}
    \widecheck\eta_\gamma(\cD_p)=0. 
\end{align}
Of course, for $\gamma=1$ the Hockey-Stick divergence is the trace distance which is faithful and therefor this strategy does not apply. 

It is intriguing that most expansion coefficients seem to become trivial for the qubit depolarizing channel, but not the one using the trace distance. Exploring the usefulness of expansion coefficients further seems like an interesting problem. 

\section{Expansion of the generalized amplitude damping channel}\label{App:GAD-exp}

In this section, we make precise and prove one statement from the main text. We summarize it in the following proposition. 
\begin{proposition}
    For the GAD channel $\cA_{p,\eta}$ with $p\in[0,1]$ and $\eta\in[0,1]$, we have
    \begin{align}
        \widecheck\eta(\cA_{p,\eta}) = 1-\widecheck\alpha^H(\cA_{p,\eta}) = \eta. 
    \end{align}
\end{proposition}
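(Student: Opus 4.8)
The plan is to prove the two-sided bound $\widecheck\eta(\cA_{p,\eta}) = 1-\widecheck\alpha^H(\cA_{p,\eta}) = \eta$ by establishing each piece separately, since we already know from Lemma~\ref{Lem:Ineq-trace-walpha} (and its hermitian analog) that $\widecheck\eta_{\tr}(\cA_{p,\eta}) \geq 1-\widecheck\alpha^H(\cA_{p,\eta})$ holds automatically. First I would prove the \emph{upper bound} $\widecheck\eta_{\tr}(\cA_{p,\eta}) \leq \eta$ by exhibiting a concrete pair of states that achieves contraction factor exactly $\eta$. The natural candidates, consistent with the main-text remark that $|1\rangle,|0\rangle$ trace out the lower border, are $\rho=|1\rangle\langle 1|$ and $\sigma=|0\rangle\langle 0|$. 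Applying the Kraus operators $A_1,\dots,A_4$ of $\cA_{p,\eta}$ to these two states and computing the trace distance of the outputs should yield a ratio of exactly $\eta$ after a short calculation, giving $\widecheck\eta_{\tr}(\cA_{p,\eta}) \leq \eta$.

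Second I would prove the \emph{matching lower bound} $1-\widecheck\alpha^H(\cA_{p,\eta}) \geq \eta$, equivalently $\widecheck\alpha^H(\cA_{p,\eta}) \leq 1-\eta$. By definition of $\widecheck\alpha^H$, it suffices to exhibit a single degrading map $\cD$ and a hermitian trace-one operator $X$ such that $\cA_{p,\eta} \deg \cD_{1-\eta,X}$, i.e.\ $\cD\circ\cA_{p,\eta}=\cD_{1-\eta,X}$ with $\cD_{1-\eta,X}(\rho)=\eta\,\rho+(1-\eta)X$. The key structural observation to exploit is that the generalized amplitude damping channel already has a contracting ``$\eta$'' built into its action: on the diagonal it damps populations toward a fixed point, and on the off-diagonal coherences it multiplies by $\sqrt{\eta}$. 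I would look for $\cD$ that post-processes $\cA_{p,\eta}$ so as to restore the identity's off-diagonal scaling—plausibly $\cD$ is itself a (generalized amplitude damping or unitary-conjugated) map chosen so that the composition collapses to the desired form $\eta\,\id + (1-\eta)X$ with $X$ the appropriate (possibly non-positive) hermitian operator.

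The main obstacle I expect is the second step: finding the explicit degrading map $\cD$ and the operator $X$. Whereas the upper bound reduces to a direct two-state computation, the lower bound is a feasibility problem in the SDP of the preceding corollary, and guessing the optimal $\cD$ by hand requires understanding precisely how the GAD's damping structure factors through a generalized depolarizing channel $\cD_{1-\eta,X}$. I would approach this by writing the Choi operator $J(\cA_{p,\eta})$ explicitly as a $4\times 4$ matrix, using the link-product constraint $J(\cA_{p,\eta})\star D = (1-\tr\tilde X)\Phi^+_{AC}+\frac{\Id}{d_A}\otimes\tilde X$ from the corollary, and solving for a feasible $D\geq 0$ with $\tr_C D=\tfrac{\Id}{d_B}$ and $\tr\tilde X = 1-\eta$. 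Once such a feasible point is produced, combining it with the upper bound and the automatic inequality $\widecheck\eta_{\tr}\geq 1-\widecheck\alpha^H$ pins all three quantities to $\eta$ and closes the argument.
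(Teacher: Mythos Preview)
Your proposal is correct and follows essentially the same approach as the paper: the upper bound via the test states $|0\rangle,|1\rangle$, the lower bound via exhibiting a feasible degrading map achieving $\widecheck\alpha^H\leq 1-\eta$, and the automatic inequality $\widecheck\eta_{\tr}\geq 1-\widecheck\alpha^H$ to close the chain. The only concrete piece you have not yet pinned down is the explicit $\cD$; following your own structural observation that $\cA_{p,\eta}$ scales coherences by $\sqrt{\eta}$, the paper takes $\cD$ to be the dephasing channel with off-diagonal damping factor $\sqrt{\eta}$ (so the composition gives $\eta$ on off-diagonals), and then $X=\diag(p,1-p)$ is in fact a genuine state, not merely hermitian.
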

\begin{proof}
 We always have $\widecheck\eta(\cA_{p,\eta}) \geq 1-\widecheck\alpha^H(\cA_{p,\eta})$, therefore it suffices to show upper and lower bound of $\eta$ on the respective quantities. 

 We start by showing $\widecheck\eta(\cA_{p,\eta})\leq \eta$. Here, we consider the test states $|0\rangle\langle0|$ and $|1\rangle\langle1|$. By direct calculation, we find that 
 \begin{align}
     \cA_{p,\eta}(|0\rangle\langle0|)-\cA_{p,\eta}(|1\rangle\langle1|) = \eta|0\rangle\langle0|-\eta |1\rangle\langle1|. 
 \end{align}
 Hence, $E_1(\cA_{p,\eta}(|0\rangle\langle0|)\|\cA_{p,\eta}(|1\rangle\langle1|))=\eta$. Since we chose orthogonal test states, we immediately have the desired upper bound. 

 For the second inequality, we need to show that there exists a degrading map $\cD$ and a generalized depolarizing channel $\cD_{q,\sigma}$ that give a feasible solution for $\widecheck\alpha^H(\cA_{p,\eta})$ and imply $\widecheck\alpha^H(\cA_{p,\eta})\leq1-\eta$. An explicit construction for these is given below in Lemma~\ref{Lem:GAD-deg-dep}. This concludes the proof. 
\end{proof}
We now prove the auxiliary lemma used in previous result. 
\begin{lemma}\label{Lem:GAD-deg-dep}
    Given the GAD channel $\cA_{p,\eta}$, there exists a dephasing channel $\cD^Z_b$ and a state $\sigma$, such that 
    \begin{align}
        \cD^Z_b\circ\cA_{p,\eta}=\cD_{1-\eta,\sigma}, 
    \end{align}
    where $\cD_{q,\sigma}=(1-q)\id+q\sigma$. Explicitly, the state $\sigma$ is given by, 
    \begin{align}
        \sigma=\begin{pmatrix}
p & 0 \\ 0 & 1-p \end{pmatrix},
    \end{align}
    and $b=\frac{1-\sqrt{\eta}}{2}$. 
\end{lemma}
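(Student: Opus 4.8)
The plan is to verify the claimed operator identity by a direct entrywise computation, treating both sides as linear maps on the space of $2\times2$ matrices. First I would fix the convention for the dephasing channel, $\cD^Z_b(X) = (1-b)X + b\,ZXZ$, and observe that it leaves the diagonal of $X$ untouched while multiplying the off-diagonal entries by the factor $1-2b$. With the prescribed $b=\frac{1-\sqrt\eta}{2}$ this factor is exactly $1-2b=\sqrt\eta$. Since $\eta\in[0,1]$ we have $b\in[0,\tfrac12]$, so $\cD^Z_b$ is a legitimate channel; likewise $\sigma=\operatorname{diag}(p,1-p)$ is a state and $1-\eta\in[0,1]$, so the target $\cD_{1-\eta,\sigma}$ is a legitimate channel as well.

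Next I would compute $\cA_{p,\eta}(X)=\sum_{i=1}^4 A_iXA_i^\dagger$ for a general $X=\begin{pmatrix}X_{00}&X_{01}\\X_{10}&X_{11}\end{pmatrix}$. A short calculation with the four Kraus operators shows that the off-diagonal entries are uniformly scaled, $(\cA_{p,\eta}(X))_{01}=\sqrt\eta\,X_{01}$, while the diagonal entries mix according to $(\cA_{p,\eta}(X))_{00}=(p+(1-p)\eta)X_{00}+p(1-\eta)X_{11}$ and $(\cA_{p,\eta}(X))_{11}=(1-p)(1-\eta)X_{00}+(1-p+p\eta)X_{11}$, where the contributions of $A_2$ and $A_4$ are exactly what transfer diagonal weight between the $00$ and $11$ slots. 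As a sanity check, these coefficients sum to $1$ in each column, confirming trace preservation.

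Composing with the dephasing channel then multiplies the off-diagonal by the further factor $\sqrt\eta$, yielding $(\cD^Z_b\circ\cA_{p,\eta}(X))_{01}=\eta X_{01}$, while the diagonal is left unchanged. I would then compare this entrywise with $\cD_{1-\eta,\sigma}(X)=\eta X+(1-\eta)\sigma\tr(X)$: the target off-diagonal is immediately $\eta X_{01}$, matching, and the diagonal coefficients reduce to the elementary identities $p+(1-p)\eta=\eta+(1-\eta)p$ and $1-p+p\eta=\eta+(1-\eta)(1-p)$, together with the cross terms $p(1-\eta)$ and $(1-p)(1-\eta)$ appearing identically on both sides. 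Verifying these identities completes the argument.

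I expect the only real obstacle to be bookkeeping rather than anything conceptual: one must carry out the Kraus sum carefully and verify the equality as maps on \emph{all} matrices rather than merely on normalized states, which is why I keep $\tr(X)$ explicit in $\cD_{1-\eta,\sigma}$. The choice $b=\frac{1-\sqrt\eta}{2}$ is then seen to be precisely what is forced by demanding that the two successive off-diagonal contractions (each of factor $\sqrt\eta$) combine to the factor $\eta$ dictated by the target depolarizing-type channel, and the specific $\sigma$ is read off directly from the resulting diagonal part.
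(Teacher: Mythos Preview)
Your proof is correct and follows essentially the same approach as the paper: a direct entrywise verification of the identity, the only difference being that the paper carries out the computation at the level of Choi matrices while you work with the action of the channels on a generic $2\times2$ matrix. These are equivalent representations of the same linear map, and your version has the minor bonus of making explicit why the value $b=\tfrac{1-\sqrt\eta}{2}$ is forced.
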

\begin{proof}
Given the explicit from of the channel $\cD$ and the state $\sigma$, this can be directly calculated on the level of Choi matrices. For completeness, we state some intermediate steps here. We have the Choi matrices, 
\begin{align}
    J(\cA_{p,\eta}) &= \frac12\begin{pmatrix}
p+(1-p)\eta & 0 & 0 & \sqrt{\eta} \\ 0 & (1-p)(1-\eta) & 0 & 0 \\ 0 & 0 & p(1-\eta) & 0 
\\ \sqrt{\eta} & 0 & 0 & p\eta+(1-p) \end{pmatrix}, \\[5mm]
    J(\cD_{1-\eta,\sigma}) &= \frac12\begin{pmatrix}
p+(1-p)\eta & 0 & 0 & \eta \\ 0 & (1-p)(1-\eta) & 0 & 0 \\ 0 & 0 & p(1-\eta) & 0 \\ \eta & 0 & 0 & p\eta+(1-p) \end{pmatrix}.
\end{align}
They are only different in the two non-zero off-diagonal elements, which are then adjusted by the appropriate dephasing channel $\cD^Z_b$. 
\end{proof}

\end{document}